\def\Exp{{\mathbb E}}
\def\O{{\mathcal O}}
\def\bepsilon{{\boldsymbol \epsilon}}
\newtheorem{theorem}{Theorem}
\newtheorem{lemma}{Lemma}
\newtheorem{proposition}{Proposition}
\newtheorem{corollary}{Corollary}
\title{Performance Analysis and Scaling Law of MRC/MRT Relaying with CSI Error in
Massive MIMO Systems}
\author{Qian~Wang,~\IEEEmembership{Student Member,~IEEE,}
        and Yindi~Jing,~\IEEEmembership{Member,~IEEE}
\thanks{Qian Wang and Yindi Jing are with the Department
of Electrical and Computer Engineering, University of Alberta, Edmonton, Alberta, CA, e-mail:\{qw8,yindi\}@ualberta.ca}}%
\begin{document}

\maketitle
\vspace{-4mm}
\begin{abstract} 
This work provides a comprehensive scaling law and performance analysis for multi-user massive MIMO relay networks, where the relay is equipped with massive antennas and uses MRC/MRT for low-complexity processing. CSI error is considered. First, a sum-rate lower bound  is derived which manifests the effect of system parameters including the numbers of relay antennas and users, the CSI quality, and the transmit powers of the sources and the relay. Via a general scaling model on the system parameters with respect to the relay antenna number,  the asymptotic scaling law of the SINR as a function of the parameter scalings is obtained, which shows quantitatively the tradeoff between the network parameters and their effect on the network performance. 
In addition, a sufficient condition on the parameter scalings for the SINR to be asymptotically deterministic is given, which covers existing studies on such analysis as special cases. 
Then, the scenario where the SINR increases linearly with the relay antenna number is studied.  The sufficient and necessary condition on the parameter scaling for this scenario is proved. It is shown that in this case, the interference power is not asymptotically deterministic, then its distribution is derived, based on which the outage probability and average bit error rate of the relay network are analysed.
\end{abstract}

{\bf Index terms:} Massive MIMO, relay networks, MRC/MRT, scaling law, deterministic equivalence analysis, performance analysis, outage probability, bit error rate.

\section{Introduction}
\label{sec:intro}
 Multiple-input multiple-output (MIMO) systems refer to systems with multiple antennas implemented at the transceiver nodes. They exploit spatial diversity to provide high data rate and link reliability \cite{MIMO}. In conventional MIMO systems, the number of antennas is usually moderate (e.g., the LTE standard allows for up to 8 antenna ports). Recently, large-scale MIMO systems or massive MIMO, where hundreds of antennas are implemented at the transceiver nodes, attract a lot of attention \cite{massive,massive-1}. It has been shown that, due to the large scale, the antennas can form sharp beams toward desired terminals, thus providing high spectral and energy efficiencies. Besides, the effects of small-scale fading and interference can be significantly reduced with linear signal processing, such as maximal-ratio-combining (MRC), maximal-ratio-transmission (MRT), and zero-forcing (ZF) \cite{massive}. 

The performance of massive MIMO systems have been widely studied in the literature \cite{efficiency, MRCvsZF, SINR_distri, composite_fading, Chi, scaling-mimo-qi}. In \cite{efficiency}, for the uplink of massive MIMO systems with MRC or ZF, the deterministic equivalence of the achievable sum-rate is derived by using the law of large numbers. The following power scaling laws are shown. With perfect channel state information (CSI), the user and/or relay power can be scaled down linearly with the number of antennas while maintaining the same signal-to-interference-plus-noise-ratio (SINR); when there is CSI error (where minimum mean-squared error (MMSE) estimation is used) and the training power equals the data transmit power, the power can only be scaled down by the square root of the number of antennas. Another work on the energy efficiency and power efficiency of a single-cell multi-user massive MIMO network  is reported in \cite{MRCvsZF}, where a Bayesian approach is used to obtain the capacity lower bounds for both MRT and ZF precodings in the downlink. It is shown that that for high spectral efficiency and low energy efficiency, ZF outperforms MRT, while at low spectral efficiency and high energy efficiency the opposite holds. While the channel models used in \cite{efficiency,MRCvsZF} are Rayleigh fading, Ricean fading channel is considered in \cite{scaling-mimo-qi} in massive MIMO uplink, where the CSI is also obtained with MMSE estimator. Sum-rate approximations on the MRC and ZF receivers are obtained using the mean values of the components in the SINR formula. The derived power scaling law is that when the CSI is perfect or Ricean factor is non-zero, the user transmit power can be scaled down inversely proportional with the number of antennas while maintaining the same SINR level. Otherwise, the transmit power can only be scaled down inversely proportional  to the square root of the antenna number. 

While the aforementioned work analyses the sum-rate and power scaling law, there are also some work on the SINR distribution and outage probability. In \cite{SINR_distri}, the SINR probability density function (PDF) of MRT precoding is derived in closed-form in the downlink of a single-cell multi-user massive MIMO network. Besides, the asymptotic SINR performance is analysed when the number of users remains constant or scales linearly with the number of antennas. For the same network, in \cite{Chi}, the outage probability of MRT precoding is derived in closed-form. The authors first obtain the distribution of the interference power, based on which the outage probability is derived in closed-form. While only small-scale fading is considered in \cite{SINR_distri,Chi}, both small-scale (Rayleigh) fading and large-scale (log-normal) fading are considered in \cite{composite_fading}. In this work, the PDF of the SINR of MRC receiver is approximated by log-normal distribution, and the outage probability is derived in closed-form. The analysis shows that the shadowing effect cannot be eliminated by the use of a large number of antennas. 

Current results on massive MIMO show fantastic advantages of utilizing a large number of antennas in communications. A natural expansion of the single-hop massive MIMO systems is the two-hop massive MIMO relay networks, where the relay station is equipped with a large number of transmit and receive antennas to help the communications of multiple source-destination pairs. Relaying technology has been integrated to various wireless communication standards (e.g., LTE-Advanced and WIMAX Release 2) as it can improve the coverage and throughput of wireless communications \cite{relay-book}. Early studies focus on single-user relay networks and various relaying schemes, such as amplify-and-forward (AF) and decode-and-forward (DF), have been proposed \cite{relay-book}. With ever-increasing demands for higher performance, recently, multi-user relay networks have gained considerable attention \cite{M3,Qian1,Qian2,our-work1}. An important issue in multi-user relaying is how to deal with inter-user interference\cite{7of1}. By utilizing massive MIMO, the interference is expected to be significantly reduced and the network performance will be significantly improved.

Research activities on massive MIMO relay networks are increasing in recent years \cite{3,4,5,sum-rate-analysis,multi-cell, channel_aging,1,2,efficiency-zf,efficiency-twoway-mrcmrt-csi,efficiency-full-duplex-AF,rate-CSI}. In \cite{3,4}, for a single-user massive MIMO relay network with co-channel interferences at the relay, the ergodic capacity and outage probability of MRC/MRT and ZF relaying schemes are derived in closed-forms. The more general multiple-user massive MIMO relay networks are analysed in \cite{5,sum-rate-analysis,multi-cell, channel_aging,1,2,efficiency-zf,efficiency-twoway-mrcmrt-csi,efficiency-full-duplex-AF,rate-CSI}. Depending on the structure of the network model, the works can be divided to the following two categories. 

In \cite{sum-rate-analysis, multi-cell, channel_aging}, a network with multiple single-antenna users, one massive MIMO relay station and one massive MIMO destination is considered. This model applies to the relay-assisted uplink multiple-access network. In \cite{sum-rate-analysis}, it is shown that with perfect CSI, and infinite relay and destination antennas, the relay or user transmit power can scale inversely proportional to the number of antennas without affecting the performance. When there is CSI error, the user or relay power can only scale down with the square root of the number of antennas, given that the training power equals the transmit power. The same network is also considered in \cite{multi-cell, channel_aging} while the co-channel interference and pilot contamination are considered in \cite{multi-cell}, and channel aging effect is considered in \cite{channel_aging}. The effects of these factors on the power scaling are shown therein.

Another type of network is the relay-assisted multi-pair transmission network, where multiple single-antenna sources communicate with their own destinations with the help of a massive MIMO relay \cite{5,1,2,efficiency-zf,efficiency-twoway-mrcmrt-csi,efficiency-full-duplex-AF,rate-CSI}. In \cite{1,2}, the sum-rates of multi-pair massive MIMO relay network with MRC/MRT and ZF relaying under perfect CSI are analysed for one-way and two-way relaying respectively. In both work, with the deterministic equivalence analysis, it is shown that the sum-rate can remain constant when the transmit power of each source and/or relay scales inversely proportional to the number of relay antennas. In \cite{5}, the same network model as \cite{2} is considered for MRC/MRT relaying where the number of relay antennas is assumed to be large but finite. The analysis shows that, when the transmit powers of the relay and sources are much larger than the noise power, the achievable rate per source-destination pair is proportional to the logarithm of the number of relay antennas, and is also proportional to the logarithm of the reciprocal of the interferer number. In \cite{efficiency-full-duplex-AF}, the full-duplex model is considered for one-way MRC/MRT relaying and a sum-rate lower bound is derived with Jensen's inequality. 

While the above work assume perfect CSI at the relay, recent study has turned to networks with CSI error \cite{efficiency-twoway-mrcmrt-csi, efficiency-zf, rate-CSI}, which is more practical and challenging to analyse.
In \cite{efficiency-zf,rate-CSI}, a one-way massive MIMO relay network model is considered, where MMSE estimation is used to obtain the CSI. While  \cite{efficiency-zf} uses ZF relaying and assumes that the CSI error exists in both hops,  \cite{rate-CSI} uses MRC/MRT relaying and assumes that the CSI error only exists in the relay-destination hop. In both work, the power scalings of the sources and relay for non-vanishing SINR are discussed under the assumption that the training power equals the data transmission power. Compared with previous power scaling law results, the analysis in \cite{efficiency-zf,rate-CSI} are more comprehensive by allowing the power scaling to be anywhere between constant and linearly increasing with the number of relay antennas.  \cite{efficiency-twoway-mrcmrt-csi} is on a two-way MRC/MRT relaying network with CSI error.  With deterministic equivalence analysis, it is shown that when the source or relay power scales inversely proportional to the number of relay antennas, the effects of small-scale fading, self-interference, and noise caused by CSI error all diminish. 

In this work, the performance of MRC/MRT relaying in a one-way massive MIMO relay network with CSI error is investigated .  Our major differences from existing work are summarized as blow.

\begin{itemize}
\item Our system model is different from all the aforementioned existing work in relaying scheme, CSI assumption, or communication protocol. The work with the closest model is \cite{rate-CSI}, where the CSI error is assumed to exist in the relay-destinations hop only. We use a more general model where CSI error exists in both hops.

\item In our scaling law analysis, a general model for network parameters, including the number of source-destination pairs, the CSI quality parameter, the transmit powers of the source and the relay, is proposed. In this model, the scale exponent with respect to the relay antenna number can take continuous values from '0' to '1'. In most existing work, only a few discrete values for the power scaling, e.g., $0,1,1/2$, are allowed. Although \cite{efficiency-zf,rate-CSI} allow continuous exponent values, they constrains the number of sources as constant and the training power equals to the transmit power.

\item While in existing work, the asymptotically deterministic equivalence analysis is based on the law of large numbers, we use the quantized measure, squared coefficient of variation (SCV), to examine this property. As law of large numbers only applies to the summation of independent and identical distributed random variables, by using the SCV, we can discuss the asymptotically deterministic property of random variables with more complex structures. 

\end{itemize}

Based on these features that distinguish our work from existing ones, our unique contributions are listed as below. 
\begin{enumerate}
\item Firstly, by deriving a lower bound on the sum-rate, we investigate the performance scaling law with respect to the relay antenna number for a general setting on the scalings of the network parameters. The law 
provides comprehensive insights and reveals quantitatively the tradeoff among different system parameters.

\item Deterministic equivalence is an important framework for performance analysis of massive MIMO systems. We derive a sufficient condition on the parameter scales for the SINR to be asymptotically deterministic. Compared with existing work, where only specific asymptotic cases are discussed, our derived sufficient condition is more comprehensive. It covers all cases in existing works, and shows more asymptotically deterministic SINR scenarios. Besides, for the SINR to be asymptotically deterministic, the tradeoff between different parameter scales is also discussed.   

\item Through the scaling law results, we show that for practical network scenarios, the average SINR is at the maximum linearly increasing with the number of relay antennas. We prove that the sufficient and necessary condition for it is that all other network parameters remain constant. Furthermore, our work shows that in this case the interference power does not diminish and it dominates the statistical performance of the SINR. By deriving the PDF of the interference power in closed-form, expressions for outage probability and average bit error rate (ABER) are obtained.
While existing work mainly focus on the constant SINR case, this linearly increasing SINR case, suitable for high quality-of-service applications, has not been studied.
\end{enumerate} 

The remaining of the paper is organized as follows. In the next section, the system model including both the channel estimation and data transmission under MRC/MRT relaying is introduced. Then the performance scaling law is analyzed in Section \ref{sec:scaling}. In Section \ref{sec:deter}, the asymptotically deterministic SINR case is discussed. The linearly increasing SINR case is investigated in Section \ref{sec:linear}.  Section \ref{sec:simu} shows the simulation results and Section \ref{sec:con} contains the conclusion.

\section{System Model and Preliminaries for Scaling Law Analysis}
\label{sec:model}
We consider a multi-pair relay network with $K$ single-antenna sources ($S_1,\cdots,S_K$), each transmitting to its own destination. That is, $S_i$ sends information to Destination $i$, $D_i$.  
We assume that the sources are far away from the destinations so that no direct connections exist. To help the communications, a relay station is deployed \cite{relay-book}. The number of antennas at the relay station, $M$, is assumed to be large, e.g., a few hundreds \cite{3,4,5,sum-rate-analysis,multi-cell, channel_aging,1,2,efficiency-zf,efficiency-twoway-mrcmrt-csi,efficiency-full-duplex-AF,rate-CSI}. In addition, we assume $M\gg K$ because under this condition, simple linear relay processing, e.g., MRC/MRT, can have near optimal performance in massive MIMO systems \cite{non-cooperative}.

Denote the $M\times K$ and $K\times M$ channel matrices of the source-relay and relay-destination links as ${\bf F}$ and ${\bf G}$, respectively. The channels are assumed to be independent and identically distributed (i.i.d.) Rayleigh fading, i.e.,  entries of ${\bf F}$ and ${\bf G}$ are mutually independent following the circular symmetric complex Gaussian (CSCG) distribution with zero-mean and unit-variance, denoted as $\mathcal{CN}(0,1)$. The assumption that the channels are mutually independent is valid when the relay antennas are well separated. The information of ${\bf F}$ and ${\bf G}$ is called the channel state information (CSI), which is essential for the relay network. In practice, the CSI is obtained through channel training. Due to the existence of noises and interference, the channel estimation cannot be perfect but always contains error. The CSI error is an important issue for massive MIMO systems \cite{efficiency,scaling-mimo-qi,efficiency-twoway-mrcmrt-csi, efficiency-zf, rate-CSI}. In what follows, we will first describe the  channel estimation model, then the data transmission and MRC/MRT relaying scheme will be introduced.

\subsection{Channel Estimation}
To combine the received signals from the sources and precode the signals for the destinations, the relay must acquire CSI. ${\bf F}$, the uplink channel from the sources to the relay, can be estimated by letting the sources send pilots to the relay. In small-scale MIMO systems, ${\bf G}$ can be estimated by sending pilots from the relay to the destinations and the destinations will feedback the CSI to the relay \cite{MIMO,relay-book}. However, this strategy is not viable for massive MIMO systems, as the training time length grows linearly with the number of relay antennas $M$, which may exceed the channel coherence interval. Consequently, to estimate $\bf G$, we assume a time-division-duplexing (TDD) system with channel reciprocity \cite{massive}. So pilots are sent from the destinations and the relay-destination channels can  be estimated at the relay station.

 Without loss of generality, we elaborate the estimation of $\bf F$, and the estimation of $\bf G$ is similar.
Since the channel estimation is the same as that in the single-hop MIMO system, we will briefly review it and more details can be found in \cite{MIMO,relay-book} and references therein. Denote the length of the pilot sequences as $\tau$. For effective estimation, $\tau$ is no less than the number of sources $K$ \cite{efficiency,MRCvsZF}.  Assume that all nodes use the same transmit power for training, which is denoted as $P_t$. Therefore, the pilot sequences from all $K$ sources can be represented by a $\tau\times K$ matrix $\sqrt{\tau P_t}{\bf \Phi}$, which satisfies ${\bf \Phi}^H{\bf \Phi}={\bf I}_K$. The $M\times \tau$ received pilot matrix at the relay is 
\begin{equation*}
{\bf Y}_{train}=\sqrt{\tau P_t}{\bf F}{\bf \Phi}^T+{\bf N},
\end{equation*}
where ${\bf N}$ is the $M\times \tau$ noise matrix with i.i.d. $\mathcal{CN}(0,1)$ elements. 

The MMSE channel estimation is considered, which is widely used in the channel estimation of massive MIMO networks \cite{efficiency, sum-rate-analysis,efficiency-zf, scaling-mimo-qi}. The MMSE estimation of ${\bf F}$ given ${\bf Y}_{train}$ is 
\begin{equation*}
\hat{{\bf F}}=\frac{1}{\sqrt{\tau P_t}}{\bf Y}_{train}{\bf \Phi}^*\frac{\tau P_t}{1+\tau P_t}=\frac{\tau P_t}{1+\tau P_t}\left({\bf F}+\frac{1}{\sqrt{\tau P_t}}  {\bf N}_F\right),
\end{equation*}
where ${\bf N}_F\triangleq {\bf N}{\bf \Phi}^*$, which has i.i.d. $\mathcal{CN}(0,1)$ elements. Similarly, the MMSE estimation of ${\bf G}$ is 
\begin{equation*}
\hat{{\bf G}}=\frac{\tau P_t}{1+\tau P_t}\left({\bf G}+\frac{1}{\sqrt{\tau P_t}}  {\bf N}_G\right).
\end{equation*}
Define ${\bf E}_f\triangleq \hat{\bf F}-{\bf F}$ and ${\bf E}_g\triangleq \hat{\bf G}-{\bf G}$ which are the estimation error matrices. Due to the feature of MMSE estimation, $\hat{\bf F}$ and ${\bf E}_f$, $\hat{\bf G}$ and ${\bf E}_g$ are mutual independent. Elements of $\hat{\bf F}$ and $\hat{\bf G}$ are distributed as $\mathcal{CN}(0,\frac{\tau P_t}{\tau P_t +1})$. Elements of ${\bf E}_f$ and ${\bf E}_g$ are distributed as $\mathcal{CN}(0,\frac{1}{\tau P_t +1})$. 

Define
\begin{equation}
E_t\triangleq\tau P_t \text { and } P_c\triangleq \frac{\tau P_t}{\tau P_t+1}. \label{eq:pc}
\end{equation} 
So $E_t$ is total energy spent in training. $P_c$ is the power of the estimated channel element, representing the quality of the estimated CSI, while $1-P_c$ is the power of the CSI error. It is straightforward to see that $0\le P_c\le 1$. When $P_c\rightarrow 1$,  the channel estimation is nearly perfect. When $P_c\rightarrow 0$, the quality of the channel estimation is very poor. Note that, different combinations of $\tau$ and $P_t$ can result in the same $P_c$. For the majority of this paper, $P_c$ will be used in the performance analysis instead of $\tau$ and $P_t$. This allows us to isolate the training designs and focus on the effects of CSI error on the system performance. When we consider special cases with popular training settings, e.g., $\tau=K$ and the same training and data transmission power, $\tau$ and $P_t$ will be used instead of $P_c$ in modelling the CSI error.

\subsection{Data Transmissions}
With the estimated CSI, the next step is the data transmission. Various relay technologies have been proposed \cite{relay-book}. For massive MIMO systems, the MRC/MRT relaying is a popular one due to its computational simplicity, robustness, and high asymptotic performance \cite{1,2,3,4,5,efficiency-twoway-mrcmrt-csi, rate-CSI}. In the rest of this section, the data transmission with MRC/MRT relaying will be introduced.

Denote the data symbol of $S_i$ as $s_i$ and the vector of symbols from all sources as $\bf{s}$. With the normalization $\Exp(|s_i|^2)=1$, we have $\Exp({\bf s}^H{\bf s})=K$, where $(\cdot)^H$ represents the Hermitian of a matrix or a vector. Let $P$ be the average transmit power of each source. The received signal vector at the relay is
\begin{equation}
{\bf x}=\sqrt{P}{\bf F}{\bf s}+{\bf n_r},
\label{eq:x}
\end{equation}
where ${\bf n_r}$ is the noise vector at the relay with i.i.d.~entries each following $\mathcal{CN}(0,1)$.

With MRC/MRT relaying, 
the retransmitted signal vector from the relay is $a_e\hat{\bf G}^H\hat{\bf F}^H {\bf x}$, where $a_e$ is to normalize the average transmit power of the relay to be $Q$. 
With straightforward calculations, we have 
\begin{eqnarray}
a_e^2= \frac{Q}{\Exp\{{\rm tr}\left((\hat{{\bf G}}^H\hat{{\bf F}}^H {\bf x})(\hat{{\bf G}}^H\hat{{\bf F}}^H {\bf x})^H\right)\}}\approx\frac{Q}{PKP_c^3M^3(1+\frac{K}{MP_c}+\frac{1}{PP_cM})},
\label{eq:a_e}
\end{eqnarray} 
where the approximation is made by ignoring the lower order terms of $M$.

Denote ${\bf f}_i$, $\hat{{\bf f}}_i$, and ${\boldsymbol \epsilon}_{f,i}$ as the $i$th columns of ${\bf F}$, $\hat{\bf F}$ and ${\bf E}_f$ respectively; ${\bf g}_i$, $\hat{\bf g}_i$ and ${\bepsilon}_{g,i}$ as the $i$th rows of ${\bf G}$, $\hat{\bf G}$ and ${\bf E}_g$ respectively. The received signal at $D_i$ can be written as follows. 
\begin{eqnarray}
y_i&=&a_e\sqrt{P}{\bf g}_i\hat{{\bf G}}^H\hat{{\bf F}}^H{\bf F}{\bf s}+a_e{\bf g}_i\hat{{\bf G}}^H\hat{{\bf F}}^H{\bf n_r}+n_{d,i},\nonumber\\
&=&\underbrace{a_e\sqrt{P}\hat{\bf g}_i\hat{\bf G}^H\hat{\bf F}^H\hat{\bf f}_i s_i}_{\text{desired signal}} +\underbrace{a_e\sqrt{P}\sum_{k=1,k\neq i}^{K}{\bf g}_i\hat{\bf G}^H\hat{\bf F}^H{\bf f}_k s_k}_{\text{multi-user interference}} +\underbrace{a_e{\bf g}_i\hat{\bf G}^H\hat{\bf F}^H{\bf n_r}}_{\text{forwarded relay noise}}+\nonumber\\
&&\underbrace{a_e\sqrt{P}{\bf \epsilon}_{g,i}\hat{\bf G}^H\hat{\bf F}^H{\bepsilon}_{f,i} s_i-a_e\sqrt{P}\hat{\bf g}_{i}\hat{\bf G}^H\hat{\bf F}^H{\bepsilon}_{f,i} s_i-a_e\sqrt{P}{\bf \epsilon}_{g,i}\hat{\bf G}^H\hat{\bf F}^H\hat{\bf f}_{i} s_i}_{\text{noise due to CSI error}} +n_{d,i},\label{eq:recei_sig_e}
\end{eqnarray}
where $n_{d,i}$ is the noise at the $i$th destination following $\mathcal{CN}(0,1)$. Equation (\ref{eq:recei_sig_e}) shows that the received signal is composed of 5 parts: the desired signal, the multi-user interference, the forwarded relay noise, the CSI error term, and the noise at $D_i$.

Define 
\begin{eqnarray}
&&P_{s,e}\triangleq \frac{|\hat{\bf g}_i\hat{\bf G}^H\hat{\bf F}^H\hat{\bf f}_i|^2}{M^4}, \quad \quad \hspace{10mm}P_{i,e}\triangleq \frac{1}{K-1}\sum_{k=1,k\neq i}^{K}\frac{|{\bf g}_i\hat{\bf G}^H\hat{\bf F}^H{\bf f}_k|^2}{M^3},\label{eq:component1}\\ 
&&P_{n,e}\triangleq\frac{||{\bf g}_i\hat{\bf G}^H\hat{\bf F}^H||_F^2}{M^3}, \quad \quad \hspace{10mm}
P_{e,1}\triangleq \frac{(1-P_c)^2}{M^3}\sum_{n=1}^{K}\sum_{m=1}^{K}\hat{\bf f}_n^H\hat{\bf f}_m\hat{\bf g}_m\hat{\bf g}_n^H,\label{pse}\\
&&P_{e,2}\triangleq(1-P_c)\frac{\|\hat{\bf g}_{i}\hat{\bf G}^H\hat{\bf F}^H\|_F^2}{M^3}, \quad P_{e,3}\triangleq(1-P_c)\frac{\|\hat{\bf G}^H\hat{\bf F}^H\hat{\bf f}_{i}\|_F^2}{M^3}.\label{eq:component2}
\end{eqnarray}
From (\ref{eq:recei_sig_e}), we know that $P_{s,e},P_{i,e},P_{n,e}$ and $P_{e,1}+P_{e,2}+P_{e,3}$ are the normalized powers of the signal, the interference, the forwarded relay noise, and the noise due to CSI error respectively. With these definitions, the SINR of the $i$th source-destination pair can be written as
\begin{eqnarray}
{\rm SINR}_{i}= M\frac{P_{s,e} }{(K-1)P_{i,e}+\frac{1}{P} P_{n,e}+P_{e,1}+P_{e,2}+P_{e,3}+\frac{KP_c^3(1+\frac{K}{MP_c}+\frac{1}{PP_cM})}{Q}}.
\label{eq:SINR_e}
\end{eqnarray} 
The achievable rate for the $i$th source-destination pair is 
\begin{equation}
C_{i}=\Exp\left\{\frac{1}{2}\log_2(1+{\rm SINR}_{i})\right\}.
\end{equation}

\subsection{Preliminaries for Scaling Law Analysis}
This paper is on the performance behaviour and asymptotic performance scaling law of the massive MIMO relay network. It is assumed throughout the paper that the number of relay antennas $M$ is very large and the scaling law is obtained by studying the highest-order term with respect to $M$. 

Due to the complexity of the network, it is impossible to rigorously obtain insightful forms for the SINR and the achievable rate for the general $M$ case. Instead, we find the asymptotic  performance properties for very large $M$ with the help of Lindebergy-L$\acute{\text{e}}$vy central limit theorem (CLT). The CLT states that, for two length-$M$ independent column vectors ${\bf v}_1$ and ${\bf v}_2$, whose elements are i.i.d. zero-mean random variables with variances $\sigma_1^2$ and $\sigma_2^2$, $$\frac{1}{\sqrt{M}}{\bf v}_1^H{\bf v}_2\xrightarrow[]{d}\mathcal{CN}(0,\sigma_1^2\sigma_2^2),$$
where $\xrightarrow[]{d}$ means convergence in distribution when $M\rightarrow \infty$. 

Another important concept in the performance analysis of massive MIMO systems is \textit{asymptotically deterministic}. In many existing literature on massive MIMO, a random variable sequence $X_M$ is said to be asymptotically deterministic if it converges almost surely (a.s.) to a deterministic value $x$, i.e., 
\begin{equation*}
X_M \overset{a.s.}{\longrightarrow}x \text{ when } M\rightarrow \infty. 
\end{equation*} 
The strong law of large numbers is usually used to derive the deterministic equivalence. The almost sure convergence  implies the convergence in probability \cite{random_book}. Another type of convergence that implies convergence in probability is the convergence in mean square \cite{random_book}.  
For a random variable sequence $X_M$ with a bounded mean, $X_M$ converges in mean square to a deterministic value $x$, i.e., $X_M\overset{m.s.}{\longrightarrow}x$ if
\[
\lim_{M\rightarrow \infty}{\rm Var}\{X_M\}=0.
\]
 The convergence in mean square requires the variances of the random variable sequence to approach zero. It has been used to define the channel hardening effects for massive MIMO\cite{massive-1,hardening}, where the convergence in mean square means that the effects of small-scale fading is ignorable when the number of antennas is large. Besides, compared with almost sure convergence, the convergence in mean square is more tractable for analysis. We adopt the convergence in mean square for the asymptotically scaling law of massive MIMO relay network.   

However, the use of the variance may cause inconvenience and sometimes confusion in performance analysis of massive MIMO systems. One can always scale $X_M$ by $1/M^n$ with large enough $n$ to have the asymptotic deterministic property and the scaled random variable converges in mean square to 0. But this does not help the performance analysis when the scaling factor $M^n$ is put back into the SINR formula. Thus to avoid the scaling ambiguity, we use the squared coefficient of variance (SCV), defined as the square of the ratio of the standard deviation over the mean of the random variable \cite{scv}. 
It is noteworthy that the bounded mean condition is important. Without this condition, the convergence with $M\rightarrow\infty$ may not be well defined. Thus in this work, a random variable sequence $X_M$ with bounded mean is said to be asymptotically deterministic if 
\begin{equation}
\lim_{M\rightarrow \infty}{\rm SCV}\{X_M\}=0.
\end{equation}


\section{Analysis on the  Achievable Rate Scaling Law}
\label{sec:scaling}
The general performance scaling law of the massive MIMO relay network
 will be studied in this section.  
We start with analysing components of the received SINR to obtain a large-scale approximation. Consequently, a lower bound on the sum-rate is derived via Jensen's inequality. Then the performance scaling law and conditions for favourable SINR (non-decreasing SINR with respect to $M$) are derived. Typical network scenarios are discussed. Our analysis will show the relationship between the SINR scale and the parameter scales, and the trade-off between different parameter scales. 

\subsection{Sum-Rate Lower Bound and Asymptotically Equivalent SINR}
For the SINR analysis, we first derive the means and SCVs of components of the SINR, i.e., $P_{s,e}$, $P_{i,e}$, $P_{n,e}$, $P_{e,1}$, $P_{e,2}$ and $P_{e,3}$. 
With the help of CLT and tedious derivations, the following can be obtained.
\begin{eqnarray}
&&\Exp\{P_{s,e}\}\approx P_c^4, \quad \hspace{3.2cm}{\rm SCV}\{P_{s,e}\}\approx\frac{8}{M},\label{eq:scv1}\\
&&\Exp\{P_{i,e}\}\hspace{1mm}\approx \hspace{1mm}P_c^3\left(2+\frac{K}{MP_c}\right),\hspace{1.1cm}  
{\rm SCV}\{P_{i,e}\}\hspace{1mm}\approx \hspace{1mm}\frac{\frac{4}{K-1}+
\frac{8+10P_c}{P_cM}+\frac{K^2+18(K-2)P_c}{(K-1)P_c^2M^2}}{4+\frac{K^2}{M^2P_c^2}+\frac{4K}{MP_c}},\\
&&\Exp\{P_{n,e}\}\approx P_c^3+\frac{K}{M}P_c^2,\quad \hspace{1.6cm}
{\rm SCV}\{P_{n,e}\}\approx\frac{2+5P_c-2P_c^2}{MP_c+\frac{K^2}{MP_c}+2K},\\
&&\Exp\{P_{e,1}\}\approx\frac{K}{M}P_c^2(1-P_c)^2,\quad \hspace{1.1cm}{\rm SCV}\{P_{e,1}\}\approx\frac{3}{K},\\
&&\Exp\{P_{e,2}\}=\Exp\{P_{e,3}\}\approx P_c^3(1-P_c), \hspace{0.35cm} {\rm SCV}\{P_{e,2}\}={\rm SCV}\{P_{e,3}\}\approx 1,
\end{eqnarray}
where the approximations are made by keeping the dominant terms of $M$. Due to the space limit, we only show the derivations of $\Exp\{P_{s,e}\}$ and ${\rm SCV}\{P_{s,e}\}$ in Appendix A. The rest can be derived similarly. 

With our definitions in (\ref{eq:component1})-(\ref{eq:component2}) and by noticing that $P_c\in[0,1]$, the random variables $P_{s,e}$, $P_{i,e}$, $P_{n,e}$, $P_{e,1}$, $P_{e,2}$, $P_{e,3}$ all have bounded means. From (\ref{eq:scv1}), we know that $P_{s,e}$ is asymptotically deterministic since its SCV approaches to 0 as $M\rightarrow \infty$. Furthermore, the decreasing rate of its SCV is linear in $M$, showing a fast convergence rate. Thus, for large $M$, we can approximate it with its mean value. While for the rest components in the SINR, their SCVs  depend on the scalings of network parameters (such as $K$ and $P_c$), which do not necessarily converge to $0$. We cannot assume they are asymptotically deterministic so far. With the aforementioned approximation, the SINR expression becomes
\begin{eqnarray}
{\rm SINR}_{i}\approx \frac{MP_c^4 }{(K-1)P_{i,e}+\frac{1}{P} P_{n,e}+P_{e,1}+P_{e,2}+P_{e,3}+\frac{KP_c^3(1+\frac{K}{MP_c}+\frac{1}{PP_cM})}{Q}}.
\label{eq:SINR_e_app}
\end{eqnarray}
With this simplification,  the following result on the sum-rate can be obtained.
\begin{lemma}
The achievable rate of Source $i$ in the massive MIMO relay network has the following lower bound: 
\begin{eqnarray}
C_{i}\ge C_{i,LB} \triangleq \frac{1}{2}\log_2\left(1+\widetilde{\rm SINR}_{i}\right),\label{rate-LB}
\end{eqnarray}
where 
\begin{equation}
\widetilde{\rm SINR}_{i}\triangleq
\frac{1}{\frac{2K}{MP_c}+\frac{K^2}{M^2P_c^2}+\frac{1}{MPP_c}+\frac{K}{M^2PP_c^2}+\frac{K}{MP_cQ}+\frac{K^2}{M^2P_c^2Q}+\frac{K}{M^2PP_c^2Q} }.\label{eq:sinr_exp}
\end{equation}
\label{lemma-rate}
\end{lemma}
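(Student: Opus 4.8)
The plan is to obtain the bound by applying Jensen's inequality to the simplified SINR in (\ref{eq:SINR_e_app}) and then substituting the large-$M$ means of the SINR components listed just above the lemma. First I would rewrite (\ref{eq:SINR_e_app}) as ${\rm SINR}_i\approx MP_c^4/D_i$, where
\[
D_i\triangleq(K-1)P_{i,e}+\tfrac{1}{P}P_{n,e}+P_{e,1}+P_{e,2}+P_{e,3}+\tfrac{KP_c^3(1+\frac{K}{MP_c}+\frac{1}{PP_cM})}{Q}
\]
is the (positive) effective interference-plus-noise power, of which only the last term is deterministic. The key observation is that $x\mapsto\log_2(1+MP_c^4/x)$ is convex on $(0,\infty)$, since its second derivative is positive there. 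Hence Jensen's inequality gives
\[
C_i=\Exp\left\{\tfrac12\log_2\!\left(1+\frac{MP_c^4}{D_i}\right)\right\}\ \ge\ \tfrac12\log_2\!\left(1+\frac{MP_c^4}{\Exp\{D_i\}}\right),
\]
which is already of the form (\ref{rate-LB}) with $\widetilde{\rm SINR}_i=MP_c^4/\Exp\{D_i\}$.

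The second step is to evaluate $\Exp\{D_i\}$ by linearity, plugging in $\Exp\{P_{i,e}\}\approx P_c^3(2+\frac{K}{MP_c})$, $\Exp\{P_{n,e}\}\approx P_c^3+\frac{K}{M}P_c^2$, $\Exp\{P_{e,1}\}\approx\frac{K}{M}P_c^2(1-P_c)^2$, and $\Exp\{P_{e,2}\}=\Exp\{P_{e,3}\}\approx P_c^3(1-P_c)$. Collecting terms yields
\[
\Exp\{D_i\}\approx 2(K-P_c)P_c^3+\frac{KP_c^2(K-2P_c+P_c^2)}{M}+\frac{P_c^3}{P}+\frac{KP_c^2}{MP}+\frac{KP_c^3}{Q}+\frac{K^2P_c^2}{MQ}+\frac{KP_c^2}{MPQ}.
\]
Because $M\gg K\ge1$ and $0\le P_c\le1$, the corrections hidden inside $2(K-P_c)P_c^3$ and $KP_c^2(K-2P_c+P_c^2)/M$ — namely $-2P_c^4$ and $KP_c^3(P_c-2)/M$, both negative — are lower order relative to $2KP_c^3$ and $K^2P_c^2/M$; discarding them only enlarges the denominator and therefore preserves the inequality. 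What remains is $\Exp\{D_i\}\approx 2KP_c^3+\frac{K^2P_c^2}{M}+\frac{P_c^3}{P}+\frac{KP_c^2}{MP}+\frac{KP_c^3}{Q}+\frac{K^2P_c^2}{MQ}+\frac{KP_c^2}{MPQ}$, and dividing both $MP_c^4$ and $\Exp\{D_i\}$ by $MP_c^4$ turns $MP_c^4/\Exp\{D_i\}$ into precisely the seven-term reciprocal in (\ref{eq:sinr_exp}).

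The routine but bookkeeping-heavy part is this second step: one must expand each component mean to exactly the order in $M$ that survives after division by $MP_c^4$, and verify that every discarded term is genuinely lower order — or at least of the correct sign — so that an inequality rather than a mere approximation is retained. The only conceptual caveat is that (\ref{eq:SINR_e_app}) is itself an approximation obtained by replacing the fast-converging $P_{s,e}$ with its mean $P_c^4$; thus the statement is really $C_i\gtrsim C_{i,LB}$ for large $M$, resting on the CLT-based mean formulas derived as in Appendix~A. Beyond this I do not expect a genuine obstacle.
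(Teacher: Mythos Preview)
Your approach is essentially the same as the paper's: apply Jensen's inequality to the convex map $x\mapsto\log_2(1+c/x)$ on the simplified SINR (\ref{eq:SINR_e_app}), then substitute the large-$M$ means of the denominator components and keep the dominant terms. If anything your sign discussion of the discarded terms is slightly more careful than the paper's, which simply drops lower-order terms and calls the result an approximation; the conceptual caveat you note about (\ref{eq:SINR_e_app}) already being an approximation is also shared.
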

\begin{proof}
As $\log_2(1+1/x)$ is a convex function of $x$ \cite{convex}, according to Jensen's inequality, we have
\begin{eqnarray}
C_{i}\ge  \frac{1}{2}\log_2\left(1+\frac{1}{\Exp\left\{\frac{1}{{\rm SINR}_i}\right\}}\right).\nonumber
\end{eqnarray}
By applying the SINR approximation in (\ref{eq:SINR_e_app}), we have
\setlength{\arraycolsep}{1pt} 
\begin{eqnarray*}
\frac{1}{\Exp\left\{\frac{1}{{\rm SINR}_i}\right\}}&=&\frac{MP_c^4 }{\Exp\left\lbrace(K-1)P_{i,e}+\frac{1}{P} P_{n,e}+P_{e,1}+P_{e,2}+P_{e,3}+\frac{KP_c^3(1+\frac{K}{MP_c}+\frac{1}{PP_cM})}{Q}\right\rbrace}\\
&=& \frac{1}{\frac{K-1}{M}\left[\frac{2}{P_c}+\frac{K}{MP_c^2}\right]+\frac{1}{MPP_c}+\frac{K}{M^2PP_c^2}+\frac{K}{M^2}(\frac{1}{P_c}-1)^2+\frac{2(1-P_c)}{MP_c}+\frac{K(1+\frac{K}{MP_c}+\frac{1}{PP_cM})}{MP_cQ}},\nonumber\\
&\approx & \frac{1}{\frac{2K}{MP_c}+\frac{K^2}{M^2P_c^2}+\frac{1}{MPP_c}+\frac{K}{M^2PP_c^2}+\frac{K}{MP_cQ}+\frac{K^2}{M^2P_c^2Q}+\frac{K}{M^2PP_c^2Q} }=\widetilde{\rm SINR}_{i}, \end{eqnarray*}
\setlength{\arraycolsep}{5pt}where the approximation is made by ignoring the lower order terms of $M$ when $M\gg 1$. Thus the lower bound in (\ref{rate-LB}) is obtained.
\end{proof}

From (\ref{rate-LB}) and (\ref{eq:sinr_exp}), we can see that the achievable rate lower bound increases logarithmically with $M$ and $P_c$. But its increasing rates with $P$, $Q$, $1/K$ are slower than logarithmic increase. Note that, by using the method in Lemma 1 of \cite{scaling-mimo-qi}, the sum-rate expression in (\ref{rate-LB}) can also be obtained. But with the method in \cite{ scaling-mimo-qi}, the derived expression is an approximation, while our derivations show that it is a lower bound for large $M$. On the other hand, from Lemma 1 of \cite{scaling-mimo-qi}, we know that the lower bound becomes tighter when the number of relay antennas $M$ or the number of sources $K$ increases.  

The parameter $\widetilde{\rm SINR}_{i}$ has the physical meaning of asymptotic effective SINR corresponding to the achievable rate lower bound. Due to the monotonic relationship in (\ref{rate-LB}), to understand the scaling law of the achievable rate is equivalent to understanding the scaling law of $\widetilde{\rm SINR}_{i}$.  

\subsection{Scaling-Law Results} 
Now, the scaling law of the asymptotic effective SINR, $\widetilde{\rm SINR}_{i}$, will be analysed to show how the system performance is affected by the size of the relay antenna array and other network parameters. To have a comprehensive coverage of network setups and applications, for all system parameters including the number of source-destination pairs $K$, the source transmit power $P$, the relay transmit power $Q$, and the CSI quality parameter $P_c$, a general scaling model with respect to $M$ is used. 

Assume that
\begin{equation}
K=\O(M^{r_k}),\quad \frac{1}{P}=\O(M^{r_p}), \quad \frac{1}{Q}=\O(M^{r_q}), \quad  \frac{1}{P_c}=\O(M^{r_c}),
\label{exponents-def}
\end{equation}
where the notation $f(M)=\O\left(g(M)\right)$ means that when $M\rightarrow\infty$, $f(M)$ and $g(M)$ have the same scaling with respect to $M$. In other words, there exists positive constants $C_1,C_2$ and natural number $m$, such that $C_1|g(M)|\le|f(M)|\le C_2|g(M)|$ for all $M\ge m$. Thus the exponents $r_k$, $r_p$, $r_q$, and $r_c$ represents the relative scales of $K$, $1/P$, $1/Q$, and $1/P_c$ with respect to $M$. For practical ranges of the system parameters, we assume that $0\le r_k,r_p,r_q,r_c\le 1$. The reasons are given in the following.
\begin{itemize}
\item The scale of $K$. Following typical applications of massive MIMO, the number of users should increase or keep constant with the number of relay antennas. Thus $r_k\ge 0$. On the other hand,  the number of users  $K$ cannot exceed $M$ since the maximum multiplexing gain provided by the relay antennas is $M$. Thus, $r_k\le 1$. 

\item The scale of $P$ and $Q$. Following the high energy efficiency and low power consumption requirements of massive MIMO, the source and relay transmit power should not increase with the number of relay antennas. But they can decrease as the number of relay antennas increases with the condition that their decreasing rates do not exceed the increasing rate of the antenna number. This is because that the maximum array gain achievable from $M$ antennas is $M$. A higher-than-linear decrease will for sure make the receive SINR a decreasing function of $M$, which contradicts the promise of massive MIMO communications. Thus $0\le r_p,r_q\le 1$.

\item The scale of $P_c$. From the definition of $P_c$ in (\ref{eq:pc}), we have $1/P_c=1+1/E_t$, thus $r_c\ge 0$. This is consistent with the understanding that the CSI quality will not improve as the number of relay antennas increases, as the training process cannot get benefits from extra antennas \cite{massive}. On the other hand, since similar to the data transmission,  the total training  
energy should not has lower scaling than $1/M$, we conclude that $1/P_c$ should not have a higher scaling than $M$. Thus $r_c\le 1$.  
\end{itemize}

In our parameter modelling, the exponents can take any value in the continuous range $[0,1]$. This is different from most existing work where only one or two special values are assumed  for the parameters. Widely used values are 0, 0.5, and 1, which mean that the parameters scale as constant, linear function, and square-root of $M$. Our model covers existing work as special cases.

For the scaling law of  $\widetilde{\rm SINR}_{i}$, denote its scaling with respect to $M$ as 
\begin{equation}
\widetilde{\rm SINR}_{i}=\mathcal{O}(M^{r_s}) \text{, or equivalently, } r_s=\lim_{M\rightarrow\infty} \frac{\log \widetilde{\rm SINR}_{i}}{\log M}.
\label{exp-SINR}
\end{equation} 
The exponent $r_s$ shows the scaling of $\widetilde{\rm SINR}_{i}$.  
\begin{theorem}
For the massive MIMO relay network with MRC/MRT relaying and CSI error, with the model in (\ref{exponents-def}) and (\ref{exp-SINR}), we have the following performance scaling law:
\begin{equation}
r_s=1-r_c-\max(r_p,r_k+r_q).
\label{SNR-scaling}
\end{equation}
\label{thm-1}
\end{theorem}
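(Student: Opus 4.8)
The plan is to reduce the rate‑scaling question to a purely arithmetic comparison of monomials in $M$. Since $x\mapsto\log_2(1+x)$ is strictly increasing, Lemma~\ref{lemma-rate} shows that the rate lower bound $C_{i,LB}$ has the same $M$‑scaling as $\widetilde{\rm SINR}_i$, so by the definition (\ref{exp-SINR}) of $r_s$ it suffices to read off the scaling exponent of the closed form (\ref{eq:sinr_exp}). I would therefore write $\widetilde{\rm SINR}_i=1/D_M$, where $D_M$ is the displayed sum of the seven positive monomials $T_1=\frac{2K}{MP_c}$, $T_2=\frac{K^2}{M^2P_c^2}$, $T_3=\frac{1}{MPP_c}$, $T_4=\frac{K}{M^2PP_c^2}$, $T_5=\frac{K}{MP_cQ}$, $T_6=\frac{K^2}{M^2P_c^2Q}$, $T_7=\frac{K}{M^2PP_c^2Q}$.

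Substituting the scaling model (\ref{exponents-def}), each $T_j=\Theta(M^{e_j})$ with $e_j$ an explicit affine function of $(r_k,r_p,r_q,r_c)$; for instance $T_3=\Theta(M^{r_p+r_c-1})$ and $T_5=\Theta(M^{r_k+r_q+r_c-1})$. Because all seven terms are nonnegative, $\max_j T_j\le D_M\le 7\max_j T_j$, so $D_M=\Theta(M^{\max_j e_j})$ and hence $r_s=-\max_j e_j$. Thus (\ref{SNR-scaling}) is equivalent to the identity $\max_j e_j=r_c-1+\max(r_p,\,r_k+r_q)$.

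To establish this identity I would argue both directions. The ``$\ge$'' direction is immediate: $T_3$ and $T_5$ already realize the exponents $r_p+r_c-1$ and $r_k+r_q+r_c-1$, whose maximum is exactly $r_c-1+\max(r_p,r_k+r_q)$. For the matching ``$\le$'' direction one must show each of $T_1,T_2,T_4,T_6,T_7$ is dominated in $M$‑scaling by $T_3$ or $T_5$. This I would do by pairwise comparison of the $e_j$ using the admissible ranges of $r_k,r_p,r_q,r_c$: the exponent of $T_1$ falls short of that of $T_5$ by $r_q\ge 0$; and $T_2,T_6,T_4,T_7$ are, respectively, $T_1,T_5,T_3,T_5$ multiplied by an extra factor $\frac{K}{MP_c}$ or $\frac{1}{MPP_c}$, so each is subordinate as long as those extra factors do not out‑scale $M^0$ on the parameter region in force.

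The step I expect to be the main obstacle is exactly this last case analysis: one has to rule out, over the \emph{entire} admissible region, that any exponent other than $r_p+r_c-1$ and $r_k+r_q+r_c-1$ becomes the maximum, and in particular that the quadratic‑in‑$K/P_c$ contributions $T_2,T_6$ stay strictly subordinate — this is where the constraints on $r_k,r_p,r_q,r_c$ (and the modelling relations they encode) must be invoked carefully rather than term by term. Once that is settled, $r_s=-\max_j e_j=1-r_c-\max(r_p,r_k+r_q)$, which is (\ref{SNR-scaling}).
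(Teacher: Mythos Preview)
Your approach is exactly the paper's: write $\widetilde{\rm SINR}_i=1/D_M$, compute the $M$--exponent of each of the seven positive summands in $D_M$, and read off $r_s=-\max_j e_j$. The paper's proof compresses your entire case analysis into one sentence (``after some tedious calculation, the term with the highest scaling exponent is either $\tfrac{1}{MPP_c}$ or $\tfrac{K}{MP_cQ}$''), so your write--up is in fact more explicit than the original.

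On the obstacle you flag: you are right that the bare constraints $r_k,r_p,r_q,r_c\in[0,1]$ do \emph{not} by themselves force $T_2,T_4,T_6,T_7$ to be subordinate (e.g.\ $r_k=r_c=1$, $r_q=0$ makes $e_2>e_5$). What closes the case analysis is the pair of implicit constraints
\[
r_k+r_c\le 1,\qquad r_p+r_c\le 1,
\]
equivalently $\tfrac{K}{MP_c}=\mathcal{O}(1)$ and $\tfrac{1}{MPP_c}=\mathcal{O}(1)$. These are built into the validity region of the large--$M$ approximations used to obtain (\ref{eq:a_e}) and Lemma~\ref{lemma-rate}: the factor $1+\tfrac{K}{MP_c}+\tfrac{1}{PP_cM}$ is treated as $\mathcal{O}(1)$ throughout, and the moment expansions in (\ref{eq:scv1})--(\ref{eq:component2}) keep $\tfrac{K}{MP_c}$ only as a bounded correction. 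Under these two inequalities, $e_2\le e_1\le e_5$, $e_4\le e_3$, $e_6\le e_5$, and $e_7\le e_5$ follow in one line each, and $\max_j e_j=\max(e_3,e_5)=r_c-1+\max(r_p,r_k+r_q)$ drops out. So your plan goes through; the ``modelling relations'' you anticipated needing are precisely those two.
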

\vspace{-1cm}
\begin{proof} 
From (\ref{eq:sinr_exp}) we can see that, the maximal scaling exponent of the terms in the denominator determines the scaling exponent of $\widetilde{\rm SINR}_{i}$ with respect to $M$. After some tedious calculation, we find that the term with the highest scaling exponent is either $\frac{1}{MPP_c}$ or $\frac{K}{MP_cQ}$. By using the parameter models in (\ref{exponents-def}), the results in (\ref{SNR-scaling}) is obtained.
\end{proof}
Sensible massive MIMO system should have $r_s\ge0$, i.e., the asymptotic effective SINR and the sum-rate scale as $\O(1)$ or higher. Otherwise, the system performance will decrease with $M$, which contradicts the motivations of massive MIMO systems. To help the presentation, we refer to the case where $r_s\ge0$ as the \textit{favourable-SINR scenario}. The condition for  favourable-SINR  is presented in the following corollary. 
\begin{corollary}
The necessary and sufficient condition for the massive MIMO relay network with MRC/MRT relaying and CSI error to have favourable-SINR is 
\begin{equation}
r_c+\max(r_p,r_k+r_q)\le 1, \quad r_c,r_p,r_q,r_k \in [0,1].
\label{cond-scaling}
\end{equation} 
\label{coro-1}
\end{corollary}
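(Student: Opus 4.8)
The plan is to obtain Corollary~\ref{coro-1} as an immediate consequence of the scaling law established in Theorem~\ref{thm-1}. By construction, the \emph{favourable-SINR scenario} is defined to be exactly the case $r_s\ge 0$, where $r_s$ is the scaling exponent of $\widetilde{\rm SINR}_i$ introduced in (\ref{exp-SINR}). Theorem~\ref{thm-1} supplies the exact identity $r_s=1-r_c-\max(r_p,r_k+r_q)$ from (\ref{SNR-scaling}). Substituting this into $r_s\ge 0$ and rearranging gives $r_c+\max(r_p,r_k+r_q)\le 1$, which is the asserted condition. The accompanying range restrictions $r_c,r_p,r_q,r_k\in[0,1]$ are not something to be derived here; they are simply carried over from the parameter-modelling discussion following (\ref{exponents-def}) and belong to the standing hypotheses.

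Because (\ref{SNR-scaling}) is an equality rather than merely a bound, the same one-line manipulation delivers both directions of the equivalence at once: if $r_c+\max(r_p,r_k+r_q)\le1$ then $r_s\ge0$, and conversely $r_s\ge0$ forces the inequality. For completeness I would also remark on the operational meaning: by the monotonicity of $\tfrac12\log_2(1+x)$ in $x$ together with the lower bound of Lemma~\ref{lemma-rate}, the condition $r_s\ge0$ is precisely what guarantees that the achievable-rate lower bound $C_{i,LB}=\tfrac12\log_2(1+\widetilde{\rm SINR}_i)$ does not vanish as $M\to\infty$, i.e.\ that the large relay array does not degrade the per-pair performance—this is what makes the regime ``favourable''.

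There is no real obstacle once Theorem~\ref{thm-1} is granted; the proof is two lines of algebra. The only point to be careful about—and worth stating explicitly so the reader does not over-interpret the result—is that the corollary inherits the large-$M$ dominant-term analysis that underlies Theorem~\ref{thm-1} (keeping only the highest-order terms among the seven summands in the denominator of $\widetilde{\rm SINR}_i$ in (\ref{eq:sinr_exp})), so the characterization in (\ref{cond-scaling}) should be read asymptotically in $M$.
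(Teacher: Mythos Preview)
Your proposal is correct and follows essentially the same approach as the paper: the paper's proof is the one-line remark that the corollary is ``a straightforward extension from (\ref{SNR-scaling}) of Theorem~\ref{thm-1},'' i.e., exactly the substitution $r_s=1-r_c-\max(r_p,r_k+r_q)\ge 0$ that you carry out. Your additional remarks on the range constraints, the equivalence coming from (\ref{SNR-scaling}) being an equality, and the asymptotic caveat are accurate elaborations but not required.
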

\vspace{-1cm}
\begin{proof}
This is a straightforward extension from (\ref{SNR-scaling}) of Theorem \ref{thm-1}.
\end{proof}

The scaling law in (\ref{SNR-scaling}) illustrates quantitatively the concatenation of the scalings of different parameters and their effects on the network performance. The condition in (\ref{cond-scaling}) forms a region of $r_k$, $r_p$, $r_q$, $r_c$ that makes the SINR favourable. They provide guidelines for the design of the massive MIMO relay network. Next, we discuss the physical meanings of (\ref{SNR-scaling}) and (\ref{cond-scaling}), and several popular network setups. 

Firstly, in (\ref{SNR-scaling}), $r_k$ and $r_q$ appears as a summation. According to their definitions in (\ref{exponents-def}), the summation is the  scaling exponent of $K/Q$. Then in (\ref{SNR-scaling}), $\max(r_p,r_k+r_q)$, which equals $\min(-r_p,-r_k-r_q)$, is the minimum of the power scaling exponents of $P$ and $Q/K$. Recall that $P$ is the per-source transmit power and $Q/K$ is the average relay power allocated to each source. Thus, from (\ref{SNR-scaling}), we can see that the performance scaling of the SINR is determined by two factors: 1) $\max(r_p,r_k+r_q)$, which is the worse per-source power scaling of the two steps, and 2) $P_c$, which is the CSI quality. 

Further, (\ref{SNR-scaling}) shows that $r_s$, which represents the scale of the system SINR,  is a decreasing function of both $\max(r_p,r_k+r_q)$ and $r_c$. Thus high transmit power and better CSI quality result in improved performance. There is a natural tradeoff between the worse per-source power and channel training (e.g., between the data transmission phase and the training phase), and one can compensate for the other in performance scaling. For the two-step communication, the worse step dominates the overall performance.

The condition in (\ref{cond-scaling}) implies $r_k+r_q\le 1$, which means that for the SINR to be favourable, the scaling of the per-source-destination-pair relay power should be no less that $1/M$. This also shows a tradeoff between $r_k$ and $r_q$. Recall that $0\le r_k,r_q\le 1$. That is, with extra relay antennas, we can serve more users or use less relay power for the same level of performance, but the improvement in the two aspects has a total limit. For example, two cases satisfying the constraint are 1) $r_k=1$, $r_q=0$; 2) $r_q=1$, $r_k=0$. The first case means that when the number of users increases linearly with the number of relay antennas (i.e., $r_k=1$), the relay power must remain constant (i.e., $r_q=0$), and thus the goal of saving relay power cannot be achieved. The second case is the opposite: when the relay power is scaled inversely proportional to the number of relay antennas, the goal of serving more users cannot be achieved.

\subsection{Discussions on Several Popular Network Settings}
In this subsection, we further elaborate the scaling law in (\ref{SNR-scaling}) and the condition in (\ref{cond-scaling}) for popular network settings. 
\begin{enumerate}
\item First, we consider the case of $r_c=0$, corresponding to perfect or constant CSI quality case (e.g., $E_t$ increases linearly in $M$). From (\ref{SNR-scaling}) and (\ref{cond-scaling}), the SINR scaling exponent is $r_s=1-\max(r_p,r_k+r_q) $ and the necessary and sufficient condition for favourable SINR is $r_k+r_q\le 1$.
Its physical meaning is that, when the CSI is perfect and for the SINR to be favourable, the most power-saving design is to make both the per-source power of the two hops decrease linearly with the number of antennas. Thus, when the CSI quality is good, we can design the network to serve more users and/or save power consumption, while maintain certain quality-of-service.

\item Next, we consider the case of $r_c=1$, which is equivalent to $E_t= \O(1/M)$. This means that the total energy used in training and the CSI quality are inversely proportional to the relay antenna number. In this case, the SINR scaling exponent is $r_s=-\max(r_p,r_k+r_q).$ To have favourable SINR, from (\ref{cond-scaling}), we need $r_p=r_k=r_q=0$. That is, the source data transmit power, the per-source relay power, and the number of users should all remain constant. This shows that the CSI quality is key to the performance of the massive MIMO relay network. With low CSI quality, all the promising features of the massive MIMO network are gone.  
\item For a general $r_c\in(0,1)$, favourable SINR requires $\max(r_p,r_k+r_q)\le1-r_c$. That is, the worse transmit power per-source of the two steps cannot be lower than $\O(1/M^{1-r_c})$. This shows the trade-off between the training phase and the data transmission phase.
\item For the most power saving setting where $r_p=1$ or $r_k+r_q=1$, the per-source transmit power of either the two steps scales as $1/M$. To have favourable SINR, $r_c=0$ is needed. Thus, the per source transmit power of either or both steps can be made inverse proportional to the number of relay antennas. But at the same time, the CSI quality must remain at least constant, not a decreasing function of $M$. If furthermore $r_k=0$ (e.g., the number of source-destination pairs $K$ remains constant), we have for this setting $P$ or $Q$ scales with $1/M$, which is the major power scaling scenario considered in the literature. It is obvious that our results cover this case, and shows more insights by considering the scales of $K$ and $P_c$.

\item While in previous discussions, $r_c$ is treated as a free parameter, next, we consider the special case of $P_t=P$ and $\tau=K$.   The condition $P_t=P$ corresponds to the practical scenario that user devices always use the same transmit power, no matter for training or data transmission. It is a common assumption in the literature \cite{efficiency-twoway-mrcmrt-csi, efficiency-zf, rate-CSI}. $\tau=K$ is the minimum training length for effective communications \cite{efficiency}. It is shown in \cite{MRCvsZF} that, for maximal-ratio processing, the case achieves the maximal spectral efficiency. We can see that in this case, $r_c=\max\{0, r_p-r_k\}$. Consequently, the SINR scale exponent is  $r_s=1-\max\{0, r_p-r_k\}-\max(r_p,r_k+r_q).$ For the SINR to be favourable, we need $\max(r_k+r_q,2r_p-r_k,r_p+r_q)\le 1$. If further $r_k=0$, i.e., the number of source-destination pairs is constant, favourable SINR requires $r_p\le 1/2$, i.e., the source transmit power can be reduced by $1/\sqrt{M}$ at maximum. This is same as the conclusion as in \cite{efficiency-twoway-mrcmrt-csi, efficiency-zf, rate-CSI}. But note that our model is different from \cite{efficiency-twoway-mrcmrt-csi, efficiency-zf, rate-CSI} and is more general. 

\item Another popular setting is to have the number of source-destination pairs increase linearly with $M$, i.e., $r_k=1$. One example is assuming that $K/M$ is a constant as $M$ increases. From (\ref{SNR-scaling}) and  (\ref{cond-scaling}), for this case, the SINR scaling exponent is $r_s=-r_c-r_{q}$ and to have favourable SINR, we need $r_c=r_{q}=0$. Thus, to support such number of source-destinations, the CSI quality must be high and at the same time the relay power cannot decrease with $M$.
\end{enumerate}



\section{Systems with Asymptotically Deterministic SINR}
\label{sec:deter}
One important concept in massive MIMO systems is asymptotically deterministic. For example, with receiver combining and/or pre-coding at the base station or relay station, random variables such as the signal power and interference power which are random in finite-dimension cases converge to deterministic values as the number of relay antennas is large  \cite{1,2}. This effect is also called channel hardening \cite{massive,massive-1}. With channel hardening, the small-scale fading effect is negligible, and so is the channel variance in the frequency domain. This not only simplifies many design issues but also enables performance analysis via the deterministic equivalences of the random variables, e.g., \cite{efficiency,1,2}. 
One important question is thus when the massive MIMO system have asymptotically deterministic 
SINR for the corresponding performance analysis to be valid. 

In this section, we derive a sufficient condition on asymptotically deterministic SINR and discuss typical scenarios. The result is summarized in the following proposition.

\begin{proposition}
\label{pro:condition}
When $M\gg 1$, a sufficient condition for the SINR to be asymptotically deterministic is 
\begin{eqnarray*}
&&1)\text{ } r_s+r_c+\max\{r_p,r_k+r_q\} = 1,\quad 2)\text{ } 2r_s+2r_c+r_k\le 1, \quad \\ 
&&3)\text{ } 2r_s+3r_c+2r_p \le 2, \quad 4)\text{ } r_c,r_p,r_q,r_k \in [0,1].
\label{eq:suff-con}
\end{eqnarray*}
\end{proposition}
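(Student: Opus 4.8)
The plan is to prove that $\lim_{M\to\infty}{\rm SCV}\{{\rm SINR}_i\}=0$, and I would first reduce this to a statement about the denominator of the SINR. Since ${\rm SCV}$ is scale invariant and, for a positive random variable with vanishing ${\rm SCV}$, ${\rm SCV}\{X\}$ and ${\rm SCV}\{1/X\}$ have the same limiting behaviour, it suffices to show ${\rm SCV}\{{\rm SINR}_i^{-1}\}\to 0$. From (\ref{eq:SINR_e}),
\[
{\rm SINR}_i^{-1}=\frac{1}{M}\cdot\frac{D_i}{P_{s,e}},\qquad D_i\triangleq (K-1)P_{i,e}+\tfrac1P P_{n,e}+P_{e,1}+P_{e,2}+P_{e,3}+\tfrac{KP_c^3\left(1+\frac{K}{MP_c}+\frac1{PP_cM}\right)}{Q}.
\]
By (\ref{eq:scv1}), ${\rm SCV}\{P_{s,e}\}=\O(1/M)\to 0$ and $\Exp\{P_{s,e}\}\to P_c^4>0$, so the factor $1/P_{s,e}$ contributes only a vanishing relative fluctuation; by the triangle inequality for standard deviations it is therefore enough to prove ${\rm SCV}\{D_i\}\to 0$, the last (deterministic) term of $D_i$ entering only through $\Exp\{D_i\}$.

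The components $X_1\triangleq(K-1)P_{i,e},\ X_2\triangleq\tfrac1P P_{n,e},\ X_3\triangleq P_{e,1},\ X_4\triangleq P_{e,2},\ X_5\triangleq P_{e,3}$ are all functions of the common estimates $\hat{\bf F},\hat{\bf G}$, hence correlated, so I would bound the variance of their sum by Minkowski's inequality rather than by adding variances:
\[
{\rm SCV}\{D_i\}=\frac{{\rm Var}\{D_i\}}{\left(\Exp\{D_i\}\right)^2}\le\left(\sum_{j=1}^{5}\frac{\sqrt{{\rm Var}\{X_j\}}}{\Exp\{D_i\}}\right)^2=\left(\sum_{j=1}^{5}\frac{\Exp\{X_j\}}{\Exp\{D_i\}}\sqrt{{\rm SCV}\{X_j\}}\right)^2.
\]
I would then substitute the explicit means and ${\rm SCV}$s from (\ref{eq:scv1}) and the equations following it, together with the scaling model (\ref{exponents-def}), and use Lemma \ref{lemma-rate} and the definition of $r_s$ to write $\Exp\{D_i\}=\Theta\!\left(MP_c^4/\widetilde{\rm SINR}_i\right)=\Theta\!\left(M^{1-4r_c-r_s}\right)$. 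Because each $\Exp\{X_j\}$ and each ${\rm SCV}\{X_j\}$ is itself a finite sum of monomials in $M$ (with exponents built from $r_c,r_p,r_q,r_k$), every term of the bound becomes an explicit power $M^{\beta}$ whose exponent $\beta$ is an affine function of $r_s,r_c,r_p,r_q,r_k$.

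What remains is combinatorial: require every exponent $\beta$ arising above to be negative. Condition 1) is precisely the scaling-law identity of Theorem \ref{thm-1}, restated so that it can be used to eliminate $r_s$ when simplifying the exponents; condition 4) just repeats the admissible parameter range; and the two substantive inequalities 2) and 3) are exactly the binding constraints that survive after maximising over all the exponents. Heuristically, 2) is forced by the components whose ${\rm SCV}$ does not vanish by itself — $P_{e,1}$ when $K=\O(1)$ (where ${\rm SCV}\{P_{e,1}\}\approx 3/K$ stays $\Theta(1)$) and $P_{e,2},P_{e,3}$ (where ${\rm SCV}\approx 1$) — so within $D_i$ those terms must instead be made asymptotically negligible, i.e.\ $\Exp\{X_j\}/\Exp\{D_i\}\to 0$; while 3) comes from the forwarded-relay-noise term $\tfrac1P P_{n,e}$ and the CSI-error terms, where the $1/P$ factor inflates the fluctuation. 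I expect this last bookkeeping step — enumerating all the monomials in the mean/${\rm SCV}$ expressions, converting them to exponents, and checking that 1)--4) dominate all of them — to be the main obstacle; the two reductions and the Minkowski bound are short, but one must take care that the ``dangerous'' components $P_{e,2},P_{e,3}$ are rendered negligible within $D_i$ rather than individually asymptotically deterministic, and that the degenerate sub-case $r_c=0$ (where $1-P_c$ may itself tend to $0$) is absorbed by the crude bound $1-P_c\le 1$, so that conditions 1)--4) remain sufficient throughout.
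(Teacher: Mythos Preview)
Your approach is correct and proves the proposition, but it differs from the paper's in two ways worth noting.

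First, where you use Minkowski's inequality $\sqrt{{\rm Var}\{\sum X_j\}}\le\sum\sqrt{{\rm Var}\{X_j\}}$ to absorb the correlations among the denominator terms, the paper instead proves a small ad hoc lemma: if each summand's variance is $\O(1/M)$, then so is every pairwise covariance (by Cauchy--Schwarz, essentially), hence so is the variance of the sum. Your route is cleaner and more standard; the paper's route is equivalent in strength but more verbose.

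Second --- and this is where your description is slightly off --- the paper does not merely require ${\rm SCV}\{D_i\}\to 0$. It explicitly imposes the stronger rate ${\rm SCV}\{D_i\}\le E/M$ (see the footnote in the proof), on the grounds that a merely $o(1)$ SCV is of limited practical value for finite $M$. With that $1/M$ rate, the term-by-term variance constraints reduce \emph{exactly} to $2r_s+2r_c+r_k\le 1$ and $2r_s+3r_c+2r_p\le 2$, so conditions 2) and 3) really are the binding ones in the paper's argument. Under your weaker requirement the bookkeeping gives looser inequalities (e.g.\ the $(K-1)P_{i,e}$ term yields only $r_s+r_c+r_k/2<1$, and the $P_{e,2},P_{e,3}$ terms only $r_s+r_c<1$); conditions 1)--4) still comfortably imply all of them, so your proof of sufficiency goes through, but 2) and 3) are then over-sufficient rather than ``exactly the binding constraints.'' If you want the stated conditions to emerge as tight from your own computation, you would need to adopt the same $1/M$-rate convention the paper uses.
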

\begin{proof}
Please see Appendix B. 
\end{proof}

From Constraint 2) of the conditions, we can see that $r_s\le 1/2$, meaning that the highest possible SINR scaling is $1/\sqrt{M}$ for the sufficient condition. In addition, $r_c\le 1/2$, meaning that to make the SINR asymptotically deterministic, the CSI quality should scale no lower than $1/\sqrt{M}$. By the definition of $P_c$ in (\ref{eq:pc}), the lowest scaling the training energy $E_t$ can have is $1/\sqrt{M}$. Note that, for a favourable SINR, the scale of the CSI quality parameter can be as low as $1/M$. Therefore, for asymptotically deterministic SINR, the constraint on the CSI quality is more strict.

Next, we investigate typical scenarios for the SINR scaling, which include all possible cases if $r_s$ and $r_c$ are allowed to take values from $\{0,1/2,1\}$ only. 
The tradeoff between parameters will be revealed.
\begin{enumerate}
\item To achieve both $r_s=1/2$ (the SINR increases linearly with $\sqrt{M}$) and asymptotically deterministic SINR, the sufficient condition reduces to $r_k=0$, $r_c=0$, and $\max\{r_p,r_q\}=1/2$. It means that when the number of users and the CSI quality remain constant, the lower of the source power and the relay power must scale as $1/\sqrt{M}$. While in existing work, only constant SINR case ($r_s=0$) has been considered \cite{efficiency, 1,2}, our result shows that the SINR can  scale as $\sqrt{M}$ with asymptotically deterministic property.

\item To achieve $r_s=0$ (constant SINR level) and asymptotically deterministic SINR, two cases may happen: a) $r_c=0$ and $\max\{r_p,r_k+r_q\}=1$; and b) $r_c=1/2$, $r_k=0$, $r_p\le 1/4$ and $r_q= 1/2$.

For Case a), when the CSI quality has constant scaling (e.g., perfect CSI or high quality channel estimation), the scale of the lower per-source transmission power of the two hops should scale as $1/M$. This is the case considered in \cite{1,2}. Similar scenarios for massive MIMO systems without relays have also been reported in \cite{efficiency}. 
Case b) indicates that when the CSI quality scales as $1/\sqrt{M}$ (e,g., the training power scales as $1/\sqrt{M}$ with fixed training length), the number of source-destination pairs should remain constant, the relay power should scale as $1/M$, and the source power can scale smaller than $1/\sqrt[4]{M}$.
\end{enumerate}


\section{Systems with Linearly Increasing SINR}
\label{sec:linear}
In our asymptotically deterministic SINR analysis, the scale of the SINR is no larger than $\O(\sqrt{M})$. While, it can be seen from (\ref{SNR-scaling}) that the maximum scale of the SINR with respect to the number of relay antennas $M$, is $\O(M)$, i.e., linearly increasing with $M$. This is a very attractive scenario for massive MIMO relay networks, in the sense that when $M\gg 1$ significant improvement in the network throughput and communication quality can be achieved. Possible applications for such scenario are networks with high reliability and throughput requirement such as industrial wireless networks and high-definition video. 

In this section, we study networks with linearly increasing SINR. First, the condition on the parameter scaling for the SINR to be linearly increasing is investigated. Then we show that in this case the interference power is not asymptotically deterministic, but with a non-diminishing SCV as $M\rightarrow \infty$. Thus deterministic equivalence analysis does not apply and the small-scale effect needs to be considered in analyzing the performance. We first derive a closed-form PDF of the interference power, then obtain expressions for the outage probability and ABER. Their scalings with network parameters are revealed.  

\begin{proposition}
\label{pro:linearSINR}
When $M\gg 1$, the sufficient and necessary condition for the average SINR to scale as $\O(M)$ is $r_c=r_q=r_p=r_k=0$, i.e., the CSI quality, the source transmit power, the relay power, and the number of users all remain constant. 
In this case, the SINR can be approximated as 
\begin{equation}
{\rm SINR}_{i,e}\approx \frac{M}{P_{i,e}\frac{(K-1)}{P_c^4}+\left(\frac{1}{P}+\frac{K}{Q}\right)(\frac{1}{P_c}+\frac{K}{MP_c^2})+2\left(\frac{1}{P_c}-1\right)+\frac{K}{M}\left(\frac{1}{P_c}-1\right)^2},
\label{eq:SINR_app_high}
\end{equation}
where ${\rm SCV}\{P_{i,e}\}\approx \frac{1}{K-1}$.
\end{proposition}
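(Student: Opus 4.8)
The plan is to prove the three assertions in turn: the necessary-and-sufficient condition $r_c=r_q=r_p=r_k=0$, the approximation (\ref{eq:SINR_app_high}), and the limiting value of ${\rm SCV}\{P_{i,e}\}$.

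For the condition I would read everything off the scaling law (\ref{SNR-scaling}) of Theorem \ref{thm-1}. Since $r_c\ge 0$ and $\max(r_p,r_k+r_q)\ge 0$, that formula forces $r_s\le 1$, so ``the average SINR scales as $\O(M)$'' is exactly $r_s=1$ (equivalently, by Lemma \ref{lemma-rate}, the rate lower bound $C_{i,LB}$ attains the $\frac12\log_2 M$ growth). Setting $r_s=1$ in (\ref{SNR-scaling}) gives $r_c+\max(r_p,r_k+r_q)=0$; being a sum of non-negative quantities, this requires $r_c=0$ and $\max(r_p,r_k+r_q)=0$, and the latter, with $r_k,r_q\ge 0$, forces $r_p=r_k=r_q=0$. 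Conversely, substituting $r_c=r_p=r_k=r_q=0$ into (\ref{SNR-scaling}) returns $r_s=1$, so the condition is both sufficient and necessary.

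For the approximation (\ref{eq:SINR_app_high}) I would start from the simplified SINR (\ref{eq:SINR_e_app}), divide numerator and denominator by $P_c^4$, and substitute the component means obtained in Section \ref{sec:scaling} term by term: $P_{s,e}$ has already been replaced by $\Exp\{P_{s,e}\}\approx P_c^4$; $P_{n,e}$ may be replaced by $\Exp\{P_{n,e}\}\approx P_c^3+\frac KM P_c^2$, which is legitimate because, with $r_c=r_k=0$, ${\rm SCV}\{P_{n,e}\}\to 0$; and $P_{e,1},P_{e,2},P_{e,3}$ are replaced by $\Exp\{P_{e,1}\}\approx\frac KM P_c^2(1-P_c)^2$ and $\Exp\{P_{e,2}\}=\Exp\{P_{e,3}\}\approx P_c^3(1-P_c)$, while $(K-1)P_{i,e}$ is kept as a genuine random quantity. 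Collecting terms --- $\frac1{PP_c^4}\Exp\{P_{n,e}\}$ and the relay-power term $\frac{K(1+K/(MP_c)+1/(PP_cM))}{QP_c}$ combine into $(\frac1P+\frac KQ)(\frac1{P_c}+\frac K{MP_c^2})$ up to an $\O(1/M)$ residue, $\frac1{P_c^4}(\Exp\{P_{e,2}\}+\Exp\{P_{e,3}\})=2(\frac1{P_c}-1)$, and $\frac1{P_c^4}\Exp\{P_{e,1}\}=\frac KM(\frac1{P_c}-1)^2$ --- and discarding the residuals of lower order in $M$ reproduces (\ref{eq:SINR_app_high}).

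The value of ${\rm SCV}\{P_{i,e}\}$ then follows from the general closed-form ${\rm SCV}\{P_{i,e}\}$ expression in Section \ref{sec:scaling} by setting $r_c=r_k=0$ (so $P_c$ and $K$ do not scale with $M$) and letting $M\to\infty$: the numerator tends to $\frac 4{K-1}$ and the denominator to $4$, whence $\frac1{K-1}$. The step I expect to be the main obstacle is justifying the replacement of $P_{e,2},P_{e,3}$ by their means: unlike $P_{s,e}$ and $P_{n,e}$, these terms are \emph{not} asymptotically deterministic (${\rm SCV}\{P_{e,2}\}={\rm SCV}\{P_{e,3}\}\approx 1$), so a vanishing-variance argument is unavailable. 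The justification I would give is that $\Exp\{P_{e,2}\}$ and $\Exp\{P_{e,3}\}$ do not depend on $K$ whereas $\Exp\{(K-1)P_{i,e}\}\approx 2(K-1)P_c^3$ is proportional to $K$, so in the massive-array regime $M\gg K$ the interference term dominates the denominator and it is the fluctuation of $P_{i,e}$ --- not that of $P_{e,2}$ or $P_{e,3}$ --- that governs the statistics of ${\rm SINR}_{i,e}$; hence (\ref{eq:SINR_app_high}), with only $P_{i,e}$ left random, is the right form for the outage-probability and ABER analysis carried out next.
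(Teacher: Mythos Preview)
Your approach is essentially the paper's. The necessary-and-sufficient condition is handled identically via Theorem~\ref{thm-1}, the SCV$\{P_{i,e}\}$ value is the same specialization of the Section~\ref{sec:scaling} formula, and for the approximation~(\ref{eq:SINR_app_high}) the paper too replaces $P_{s,e}$, $P_{n,e}$, $P_{e,1}$, $P_{e,2}$, $P_{e,3}$ by their means while keeping only $P_{i,e}$ random.

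The one place where the paper argues more sharply is the step you flag as the main obstacle. Rather than comparing \emph{means} of $(K-1)P_{i,e}$ and $P_{e,2},P_{e,3}$ (and the condition ``$M\gg K$'' is not actually the relevant one here --- both means are $\O(1)$ in $M$), the paper compares \emph{variances} directly:
\[
{\rm Var}\Bigl\{\tfrac{K-1}{P_c^4}P_{i,e}\Bigr\}\ \ge\ \tfrac{4(K-1)}{P_c^2}\ \ge\ 4(K-1)\Bigl(\tfrac{1}{P_c}-1\Bigr)^2\ =\ 4(K-1)\,{\rm Var}\Bigl\{\tfrac{P_{e,2}}{P_c^4}\Bigr\},
\]
so for $K\ge 3$ the fluctuation of the interference term dominates that of $P_{e,2}$ and $P_{e,3}$ (and ${\rm Var}\{P_{e,1}/P_c^4\}=\O(1/M^2)$, hence negligible outright). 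This is the clean way to justify treating $P_{e,2},P_{e,3}$ as deterministic despite their $\O(1)$ SCVs: it is not that their randomness vanishes, but that it is swamped by that of $(K-1)P_{i,e}$. Your mean-based heuristic points in the right direction --- combined with the fact that all three SCVs are $\O(1)$ it would yield the same variance ordering --- but the paper's explicit inequality makes the dominance quantitative and independent of any large-$M$ assumption.
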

\begin{proof}
Please see Appendix C.
\end{proof}

Proposition \ref{pro:linearSINR} shows that for linearly-increasing SINR, the interference power is not asymptotically deterministic and does not diminish as $M$ increases. In addition, the randomness of the interference power is the dominant contributor to the random behaviour of the SINR.
With this result, to analyse the outage probability and ABER performance, the distribution of the interference needs to be derived. 

\begin{proposition}
\label{pro:pdf}
Define  
\begin{eqnarray}
&&\rho_e=\frac{1}{\sqrt{M}}\frac{\sqrt{\frac{4}{P_c}+10}}{2+\frac{K}{MP_c}},\label{eq:miu}\\
&&b_e=(K-1)\rho_e, \quad c_e=1-\rho_e ,\quad d_e=\frac{P_c^3}{K-1}\left(2+\frac{K}{MP_c}\right).
\end{eqnarray}
When $M\gg 1$, the PDF of $P_{i,e}$ has the following approximation:
\begin{equation}\label{equ4}
f_{P_{i,e}}(y)=\frac{c_e}{b_e+c_e}\sum\limits_{i=0}^\infty \left(\frac{b_e}{b_e+c_e}\right)^i\phi\hspace{-0.5mm}\left(y;K+i-1,d_ec_e\right),
\end{equation}
where $\phi(y;\alpha,\beta)=\frac{y^{\alpha-1}e^{-y/\beta}}{\beta^\alpha(\alpha-1)!}$ is the PDF of Gamma distribution with shape parameter $\alpha$ and scale $\beta$. It can also be rewritten into the following closed-form expression:
\begin{equation}
\label{cf-pdf-e}
f_{P_{i,e}}(y)\approx\frac{(b_e+c_e)^{K-3}}{d_e
b_e^{K-2}}
\left[e^{-\frac{y}{d_e(b_e+c_e)}}-e^{-\frac{y}{d_ec_e}}
\hspace{-1mm}\sum_{n=0}^{K-3}\hspace{-0.5mm}\frac{1}{n!}\left(\hspace{-0.5mm}\frac{b_e}{d_ec_e(b_e+c_e)}y\right)^{n}\hspace{-0.5mm}\right].\hspace{1cm}
\end{equation}

\end{proposition}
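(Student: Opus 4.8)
The plan is to obtain an asymptotic stochastic description of the interference power $P_{i,e}$ in (\ref{eq:component1}) under the regime $r_c=r_q=r_p=r_k=0$ of Proposition~\ref{pro:linearSINR}, to identify why its law is --- to the relevant order in $M$ --- a geometric mixture of Gamma densities, to pin down the three parameters $(b_e,c_e,d_e)$ by matching the mean and SCV already computed in Section~\ref{sec:scaling}, and finally to resum the series into the closed form (\ref{cf-pdf-e}).

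First I would isolate the dominant structure of each interference amplitude $z_k\triangleq{\bf g}_i\hat{\bf G}^H\hat{\bf F}^H{\bf f}_k$, so that $P_{i,e}=\frac{1}{K-1}\sum_{k\neq i}|z_k|^2/M^3$. Expanding $\hat{\bf G}^H\hat{\bf F}^H=\sum_{j=1}^K\hat{\bf g}_j^H\hat{\bf f}_j^H$ and splitting off the indices $j=i$ and $j=k$, and using $\hat{\bf g}_i{\bf g}_i^H,\,\hat{\bf f}_k^H{\bf f}_k=\O(M)$ against $\hat{\bf f}_i^H{\bf f}_k,\,{\bf g}_i\hat{\bf g}_k^H=\O(\sqrt M)$ (the remaining block being lower order since $r_k=0$), one gets $z_k/M^{3/2}\approx P_c(\hat{\bf f}_i^H{\bf f}_k+{\bf g}_i\hat{\bf g}_k^H)/\sqrt M$. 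Since $\hat{\bf f}_i$ and ${\bf g}_i$ are independent of $\{{\bf f}_k,\hat{\bf g}_k\}_{k\neq i}$, conditioning on $(\hat{\bf f}_i,{\bf g}_i)$ makes $(\hat{\bf f}_i^H{\bf f}_k+{\bf g}_i\hat{\bf g}_k^H)/\sqrt M$ i.i.d.\ in $k$ and, by the CLT, $\mathcal{CN}(0,\sigma^2)$ with $\sigma^2=(\|\hat{\bf f}_i\|^2+P_c\|{\bf g}_i\|^2)/M$. Hence conditionally $P_{i,e}$ is Gamma with shape equal to the interferer count $K-1$ and a random scale proportional to $\sigma^2$; and because $M\sigma^2=\|\hat{\bf f}_i\|^2+P_c\|{\bf g}_i\|^2$ is a sum of two independent $\O(M)$ Gamma variables, it concentrates around $2MP_c$ with $\O(1/\sqrt M)$ relative fluctuation. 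Thus $P_{i,e}$ is, to leading order, $\phi(\cdot\,;K-1,2P_c^3/(K-1))$, perturbed only by a common near-deterministic scale shared by all interferers.

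The second step converts this picture into the stated density. A $\mathrm{Gamma}(K-1,\cdot)$ variable scaled by $1+(\text{small nonnegative fluctuation})$ equals, to this order, a $\mathrm{Gamma}(K-1,\cdot)$ variable plus a small random exponential-type contribution; replacing the latter by a geometric compound of $\mathrm{Exp}(1)$ variables preserves nonnegativity, leaves the mean and variance free to tune, and produces exactly a density $\frac{c_e}{b_e+c_e}\sum_{i\ge0}\left(\frac{b_e}{b_e+c_e}\right)^i\phi(\cdot\,;K+i-1,d_ec_e)$. I would then adopt the parameterisation $b_e=(K-1)\rho_e$, $c_e=1-\rho_e$, which makes this mixture have mean $(K-1)d_e$ independently of $\rho_e$, so that matching to $\Exp\{P_{i,e}\}$ forces $d_e=\Exp\{P_{i,e}\}/(K-1)=\frac{P_c^3}{K-1}(2+\frac{K}{MP_c})$. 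A short computation gives the mixture's SCV as $\frac{1+(K-2)\rho_e^2}{K-1}$; equating this to ${\rm SCV}\{P_{i,e}\}$ from Section~\ref{sec:scaling} (the expression following (\ref{eq:scv1})) yields $\rho_e^2=((K-1){\rm SCV}\{P_{i,e}\}-1)/(K-2)$, which, after dropping $\O(1/M^2)$ terms, simplifies to $\rho_e=\frac{1}{\sqrt M}\cdot\frac{\sqrt{4/P_c+10}}{2+K/(MP_c)}$. This establishes (\ref{equ4}).

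The closed form (\ref{cf-pdf-e}) then follows by a mechanical resummation: substitute $\phi(y;\alpha,\beta)=y^{\alpha-1}e^{-y/\beta}/(\beta^\alpha(\alpha-1)!)$ into (\ref{equ4}), reindex by $m=K+i-2$, pull out the $m$-independent factors, and use $\sum_{m\ge K-2}x^m/m!=e^{x}-\sum_{n=0}^{K-3}x^n/n!$ with $x=b_ey/(d_ec_e(b_e+c_e))$; the two exponentials merge through $e^{-y/(d_ec_e)}e^{x}=e^{-y/(d_e(b_e+c_e))}$. The non-routine part is the second step: one has to argue convincingly that the $\mathrm{Gamma}(K-1,\cdot)$-with-perturbed-common-scale structure emerging from the first step is faithfully captured, at the $\O(1/M)$ level, by the geometric Gamma-mixture family --- this is a modelling choice, not an identity --- after which the parameters $(\rho_e,d_e)$ are fixed using only the moment expressions already derived in Section~\ref{sec:scaling}; the first step needs only order bookkeeping and the last step is purely algebraic.
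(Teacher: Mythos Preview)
Your first step (isolating the two dominant cross-terms, conditioning on $(\hat{\bf f}_i,{\bf g}_i)$, and recognising that $P_{i,e}$ is then conditionally $\mathrm{Gamma}(K-1,\cdot)$ with a nearly deterministic random scale) is sound and is in fact equivalent to the paper's observation that the summands $|{\bf g}_i\hat{\bf G}^H\hat{\bf F}^H{\bf f}_k|^2/((K-1)M^3)$ are approximately exponential with common scale $d_e$ and common pairwise correlation. The third step (resummation) is also correct and matches the paper. The difficulty is your second step.

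Postulating the geometric Gamma-mixture family and then fixing $(\rho_e,d_e)$ by matching the first two moments is not a proof that the density has this form; many two-parameter families would match two moments equally well, and you yourself flag this as ``a modelling choice, not an identity.'' The paper does not guess this family: it observes that $P_{i,e}$ is a sum of $K-1$ \emph{equicorrelated} unit-shape Gamma variables (each with scale $d_e$, pairwise correlation $\rho_e^2$ obtained directly from a covariance computation), and then invokes the known density for such a sum due to Alouini--Abdi--Kaveh. That formula is an infinite Gamma mixture whose weights are determined by the eigenvalues of the $(K-1)\times(K-1)$ matrix with $d_e$ on the diagonal and $d_e\rho_e$ off-diagonal; because of the equicorrelation, there is one large eigenvalue $d_e(1+(K-2)\rho_e)=d_e(b_e+c_e)$ and $K-2$ repeated small eigenvalues $d_e(1-\rho_e)=d_ec_e$, and the weights collapse to the geometric sequence $(b_e/(b_e+c_e))^i$. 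In other words, the geometric mixture is \emph{derived}, not assumed, and $\rho_e$ enters as the (square root of the) actual pairwise correlation, not as a fitted moment parameter.

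A secondary symptom of the gap: carrying your moment-matching through with the paper's ${\rm SCV}\{P_{i,e}\}$ gives $\rho_e^2=\frac{1}{M}\,\dfrac{4/P_c+10(K-1)/(K-2)}{(2+K/(MP_c))^2}+O(M^{-2})$, not the stated $\dfrac{4/P_c+10}{(2+K/(MP_c))^2}$; the discrepancy is $O(1/(K-2))$ at the leading $1/M$ order. So even at the level of parameters, two-moment fitting does not reproduce the proposition exactly. The fix is to replace your second step by the covariance calculation for two distinct interferers and the equicorrelated-Gamma-sum formula; then the mixture form and the value of $\rho_e$ both fall out without any fitting.
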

\begin{proof}
Please see Appendix D.
\end{proof}

From (\ref{equ4}), it can be seen that the interference power has a mixture of infinite Gamma distributions with the same scale parameter which is $d_ec_e$ but different shape parameters. But as (\ref{equ4}) is in the form of an infinite summation, it is manipulated into (\ref{cf-pdf-e}) for further analysis. Besides, when the CSI quality is high, i.e., $P_c \approx 1$, we have  $K/(MP_c)\ll 1 $ and thus $\rho_e$ and $d_e$ can be simplified by ignoring the term $K/(MP_c)$. Compared with the perfect CSI case where $P_c=1$, the CSI error makes $d_ec_e$ smaller.


\subsection{Outage Probability Analysis}
Outage probability is the probability that the SINR falls below a certain threshold. Due to the complexity of relay communications, the user-interference, and the large scale, the outage probability analysis of multi-user massive MIMO relay networks is not available in the literature. The derived approximate PDF for the interference power in (\ref{cf-pdf-e}) and the simplified SINR approximation in (\ref{eq:SINR_app_high}) for linearly increasing SINR case allow the following outage probability derivation.

Let $\gamma_{th}$ be the SINR threshold and define 
\[\xi\triangleq \left(\frac{1}{P}+\frac{K}{Q}\right)\left(\frac{1}{P_c}+\frac{K}{MP_c^2}\right)+2\left(\frac{1}{P_c}-1\right)+\frac{K}{M}\left(\frac{1}{P_c}-1\right)^2.
\] The outage probability of User $i$ can be approximated as
\setlength\arraycolsep{1pt}
\begin{eqnarray*}
P_{out}(\gamma_{th})&=&{\mathbb{P}}({\rm SINR}_{i,e}<\gamma_{th})\nonumber \\
&\approx& {\mathbb{P}}\left(\frac{M}{P_{i,e}\frac{K-1}{P_c^4}+\xi}<\gamma_{th}\right)={\mathbb{P}}\left(P_{i,e}>\left(\frac{M}{\gamma_{th}}-\xi\right) \frac{P_c^4}{K-1}\right)\\
&=&\left\{\begin{array}{ll} 1 & \mbox{\ \ if $\gamma_{th}\ge\frac{M}{\xi}$}\\ {\mathbb{P}}\left(P_{i,e}>\left(\frac{M}{\gamma_{th}}-\xi\right) \frac{P_c^4}{K-1} \right)& \mbox{\ \ otherwise}\end{array}\right..\label{outage_expression}
\end{eqnarray*}
When $\gamma_{th}< \frac{M}{\xi}$, from (\ref{cf-pdf-e}), we have 
\begin{eqnarray}
&&\hspace{-2mm} P_{out}(\gamma_{th}) \approx \left(\frac{b_e}{b_e+c_e}\right)^{\hspace{-1mm} 2-K}\hspace{-2mm} e^{-\frac{\left(\frac{M}{\gamma_{th}}-\xi\right)P_c^4}{(K-1)d_e(b_e+c_e)}}\nonumber \\
&&\qquad-\frac{c_e}{b_e+c_e} \hspace{-1mm}\sum_{n=0}^{K-3}
\frac{1}{\Gamma(n+1)} \left(\frac{b_e}{b_e+c_e}\right)^{n-K+2}\hspace{-2mm}\Gamma\hspace{-1mm}\left(n+1,\frac{\left(\frac{M}{\gamma_{th}}-\xi\right)P_c^4 }{(K-1)d_ec_e}\right)\hspace{-1mm},
\label{outageprob}
\end{eqnarray}
 where $\Gamma(s,x)\triangleq\int_x^\infty t^{s-1}e^{-t}\mathrm{d}t$ is the upper incomplete gamma function \cite{Int}. This outage probability expression is too complex for useful insights. A simplified one is derived in the following proposition for systems with high CSI quality.
\begin{proposition} \label{pro:outage_app}
Define $$D\triangleq \frac{\left(2\left(1-P_c\right)+\frac{1}{P}+\frac{K}{Q}\right)P_c^3}{(K-1)d_e(b_e+c_e)}.$$
When $E_t\gg1$ and $M\gg \gamma_{th}\left(2d_ec_e(1+\frac{c_e}{{b_e}}) K(K-1)+\frac{1}{P}+\frac{K}{Q}\right)$,  we have
\begin{equation}
P_{out}(\gamma_{th}) \approx \left(\frac{b_e}{b_e+c_e}\right)^{2-K}e^{D-\frac{MP_c^4}{\gamma_{th}(K-1)d_e(b_e+c_e)}}.
\label{outate_app}
\end{equation}
\end{proposition}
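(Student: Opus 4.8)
The plan is to start from the outage expression (\ref{outageprob}), which is exact up to the earlier large-$M$ approximations and valid whenever $\gamma_{th}<M/\xi$, and to obtain (\ref{outate_app}) through two reductions: rewrite the exponent of the leading term in terms of $D$, and show the finite incomplete-gamma sum is negligible next to that leading term. First I would note that the hypothesis $M\gg\gamma_{th}\big(2d_ec_e(1+\tfrac{c_e}{b_e})K(K-1)+\tfrac1P+\tfrac KQ\big)$ already forces $M/\gamma_{th}\gg\xi$, since $\xi$ is of order $\tfrac1P+\tfrac KQ$ plus bounded terms; hence $\gamma_{th}<M/\xi$ and $\tfrac M{\gamma_{th}}-\xi=\tfrac M{\gamma_{th}}\big(1+\smallO(1)\big)$. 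I keep the $\xi$ contribution for the moment because it is exactly what produces $D$.

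For the exponent, I would expand $\xi$ from its definition, multiply by $P_c^4$, and group the result as $\xi P_c^4=\big(\tfrac1P+\tfrac KQ\big)P_c^3+2(1-P_c)P_c^3+\tfrac1M\big[(\tfrac1P+\tfrac KQ)KP_c^2+K(1-P_c)^2P_c^2\big]$. The first two terms are precisely the numerator of $D$; dividing by $(K-1)d_e(b_e+c_e)=P_c^3\big(2+\tfrac K{MP_c}\big)\big(1+(K-2)\rho_e\big)$, which is $\Theta(1)$ (bounded away from $0$ and $\infty$, and tending to $2$ when $E_t\gg1$), gives $D$, while the $1/M$ bracket over the same $\Theta(1)$ quantity is $\O(K/M)=\smallO(1)$ because $K$ is constant in the linearly-increasing-SINR regime (Proposition \ref{pro:linearSINR}), and is further suppressed through $(1-P_c)^2$ under $E_t\gg1$. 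Thus the exponent of the leading term in (\ref{outageprob}), namely $-\tfrac{(M/\gamma_{th}-\xi)P_c^4}{(K-1)d_e(b_e+c_e)}$, equals $D-\tfrac{MP_c^4}{\gamma_{th}(K-1)d_e(b_e+c_e)}$ up to a vanishing additive error — exactly the exponent in (\ref{outate_app}).

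The main work, and the step that genuinely needs the hypothesis in the stated form, is discarding the incomplete-gamma sum. I would use the integer-order identity $\Gamma(n+1,x)=n!\,e^{-x}\sum_{j=0}^{n}x^j/j!$, so each summand is a polynomial of degree $\le K-3$ in $x_2\triangleq\tfrac{(M/\gamma_{th}-\xi)P_c^4}{(K-1)d_ec_e}$ times $e^{-x_2}$, whereas the leading term carries $e^{-x_1}$ with $x_1\triangleq\tfrac{c_e}{b_e+c_e}x_2<x_2$ and $x_2-x_1=\tfrac{b_e}{b_e+c_e}x_2$. Tracking the $M$-powers: $c_e/b_e=\Theta(\sqrt M)$, so the leading term's coefficient $\big(\tfrac{b_e}{b_e+c_e}\big)^{2-K}=\big(1+\tfrac{c_e}{b_e}\big)^{K-2}$ is of order $M^{(K-2)/2}$, the dominant summand ($n=K-3$) is of order $\tfrac{c_e}{b_e}\,x_2^{K-3}e^{-x_2}=\Theta\big(M^{K-5/2}\gamma_{th}^{-(K-3)}\big)e^{-x_2}$, and therefore the ratio (sum)/(leading term) is of order $(\sqrt M/\gamma_{th})^{K-3}e^{-(x_2-x_1)}$. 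Since $x_2-x_1$ is of order $\sqrt M/\gamma_{th}$ and the hypothesis makes exactly this quantity large — its first component $2d_ec_e(1+\tfrac{c_e}{b_e})K(K-1)=\Theta(\sqrt M)$ is calibrated so that $x_2-x_1$ is pushed past the peak $u=K-3$ of $u^{K-3}e^{-u}$ — the exponential beats the polynomial and the sum drops out. I expect this $M$-bookkeeping (showing an $M^{(K-3)/2}$-type prefactor is dominated by $e^{-\Theta(\sqrt M/\gamma_{th})}$, and handling the trivial cases $K=2,3$ where the sum is empty or reduces to a single term) to be the only delicate part; everything else is direct substitution, and combining the two reductions yields (\ref{outate_app}).
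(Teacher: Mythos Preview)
Your proposal is correct and follows essentially the same route as the paper: both isolate the leading exponential, then show the incomplete-gamma sum is negligible by reducing it to $\tfrac{c_e}{b_e+c_e}e^{-a}\sum_{n\le K-3}a^n/n!$ with $a=x_2-x_1=\tfrac{b_e}{b_e+c_e}x_2$, and use the hypothesis to force $a\gg 2K$ so the exponential kills the polynomial. Your only deviations are cosmetic---you use the exact integer-order identity $\Gamma(n+1,x)=n!\,e^{-x}\sum_{j\le n}x^j/j!$ where the paper invokes the asymptotic $\Gamma(n+1,x)\approx x^n e^{-x}$, and you carry the full $\xi P_c^4$ to extract $D$ explicitly rather than first approximating $\xi\approx\tfrac1P+\tfrac KQ$ via $P_c\approx1$---both of which are slightly more careful but land in the same place.
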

\begin{proof} 
By the definitions of $P_c$ and $E_t$ in (\ref{eq:pc}), when $E_t\gg1$, we have $P_c\approx 1$. Thus $\xi\approx 1/P+K/Q$. Further define  $$a\triangleq\frac{b_e\left(\frac{M}{\gamma_{th}}-\xi\right)P_c^4}{(K-1)d_ec_e(b_e+c_e)}.$$  
When $M\gg \gamma_{th}\left(2d_ec_e(1+\frac{c_e}{{b_e}}) K(K-1)+\frac{1}{P}+\frac{K}{Q}\right)$, we have $a\gg 2K>1$ and therefore
 $$\frac{\left(\frac{M}{\gamma_{th}}-\xi\right)P_c^4}{(K-1)d_ec_e} \gg 1.$$
  Then, from \cite[8.357]{Int} we know that
$$\Gamma\left(n+1,\frac{\left(\frac{M}{\gamma_{th}}-\xi\right)P_c^4 }{(K-1)d_ec_e}\right)\approx
\left(\frac{\left(\frac{M}{\gamma_{th}}-\xi \right)P_c^4}{(K-1)d_ec_e}\right)^n e^{-\frac{\left(\frac{M}{\gamma_{th}}-\xi\right)P_c^4}{(K-1)d_ec_e}}.$$
 With this approximation, the outage probability expression in (\ref{outageprob}) can be reformulated as
\begin{eqnarray*}
\hspace{-2mm} P_{out}(\gamma_{th}) &\approx & \left(\hspace{-1mm}\frac{b_e}{b_e+c_e}\hspace{-1mm}\right)^{\hspace{-1mm}2-K}\hspace{-4mm} e^{-\frac{\left(\frac{M}{\gamma_{th}}-\xi\right)P_c^4}{(K-1)d_e(b_e+c_e)}}\hspace{-2mm}\left[1-\frac{c_e}{b_e+c_e}e^{-a}\sum_{n=0}^{K-3}\frac{a^n}{\Gamma(n+1)}\right]
\end{eqnarray*}



Notice that as $a\gg 2K>1$, $e^{a}\gg\sum_{n=0}^{K-3}\frac{a^n}{\Gamma(n+1)}$. Thus the second term in the bracket of the previous formula can be ignored, and the approximation in (\ref{outate_app}) is obtained.
\end{proof}

We can see that the outage probability approximation in (\ref{outate_app}) is tight when the number of relay antennas is much larger than the number of source-destination pairs and the training power and transmit powers are high. These conditions will result in a high received SINR. Thus, the approximation in (\ref{outate_app}) applies to the high SINR case.

Note that (\ref{outate_app}) can also be obtained by deleting the second summation term in the PDF formula in (\ref{cf-pdf-e}) and then integrating with the approximated PDF. This is because that, for the high SINR case, the outage probability is determined by the SINR distribution in the small SINR region, which is equivalently the high interference power region, corresponding to the tail of the PDF of the interference power. It can be seen from the PDF in (\ref{cf-pdf-e}) that, the first term has a heavier tail, thus dominates the outage probability.

Now, we explore insights from (\ref{outate_app}). As $b_e,c_e,d_e$ are independent with $P$ or $Q$, the outage probability scales as $e^{\frac{P_c^3}{P(K-1)d_e(b_e+c_e)}}$ with $P$ and scales as $e^{\frac{KP_c^3}{Q(K-1)d_e(b_e+c_e)}}$ with $Q$. Firstly, it shows the natural phenomenon that increasing $P$ or $Q$ will decrease the outage probability. Also, we can see that the outage probability curve with respect to $Q$ has a sharper slope than that with $P$. For example, let $P=Q=\alpha$, doubling $P$ alone will shrink the outage probability by a factor of $e^{\frac{P_c^3}{2(K-1)d_e(b_e+c_e)\alpha}}$, while doubling $Q$ alone will shrink the outage probability by a factor of $e^{\frac{KP_c^3}{2(K-1)d_e(b_e+c_e)\alpha}}$, which is $K$ powers of the shrinkage of the doubling-$P$ case. Furthermore,
the outage probability will not diminish to zero as the user and relay transmit power increase. An error floor exists due to the user-interference. On the other hand, increasing the number  of relay antennas to infinity leads to faster decrease in the outage probability and makes it approach zero. 

Note that in our analysis, we assume $M\gg 1$ but does not go to infinity. So terms with $1/\sqrt{M}$ are not treated as asymptotically small and thus are not ignored. If $M\rightarrow \infty$ and $P_c\rightarrow 1$, the $1/\sqrt{M}$ terms can be seen as $0$ and we will have $P_{out}(\gamma_{th})\approx\left(\frac{(K-1)\sqrt{3.5}}{\sqrt{M}}\right)^{2-K}e^{-\frac{M}{2\gamma_{th}}}.$ However, this asymptotic analysis is not practical because the number of massive MIMO antennas is usually a few hundreds in practice, so that $\sqrt{M}$ may not be much larger than other parameters such as $K,P,Q$.

\subsection{ABER analysis}
ABER is anther important performance metric. Due to the complexity of the SINR distribution, ABER analysis of the massive MIMO relay network is not available in the literature. For the linearly increasing SINR case, the ABER can be analyzed as below. 

Denote the ABER as $P_b(e)$. It is given by
\begin{equation}
P_b(e)=\int_0^{\infty}P_b(e|r)f_{\rm SINR}(r){\rm d}r,\label{eq:defpb}
\end{equation}
where $P_b(e|r)$ is the conditional error probability and $f_{\rm SINR}(r)$ is the PDF of the SINR. For  channels with additive white Gaussian noise, $P_b(e|r)=A {\rm erfc}\left(\sqrt{B r}\right)$ for several Gray bit-mapped constellations employed in practical systems, where ${\rm erfc}(x)$ is the complementary error function, $A$ and $B$ are constants depended on the modulation. For example, for BPSK, $A=0.5$, $B=1$. 

For the linearly increasing SINR case, With the PDF of the interference power in (\ref{cf-pdf-e}) and the SINR approximation in (\ref{eq:SINR_app_high}), the PDF of the SINR can be derived as below.
\begin{eqnarray}
&&f_{\rm SINR}(r)=\frac{\left(b_e+c_e\right)^{K-3}MP_c^4}{r^2 (K-1)d_eb_e^{K-2}}e^{-\frac{\left(\frac{M}{r}-\xi\right)P_c^4}{(K-1)d_e(b_e+c_e)}}\nonumber\\
&&\hspace{1.9cm}-\sum_{n=0}^{K-3}\frac{(b_e+c_e)^{K-n-3}MP_c^{4n+4}}{\Gamma(n+1)((K-1)d_e)^{n+1}c_e^nb_e^{K-n-2}}\frac{\left(\frac{M}{r}-\xi\right)^n}{r^2}e^{-\frac{\left(\frac{M}{r}-\xi\right)P_c^4}{(K-1)d_ec_e}}, r\in\left(0,\frac{M}{\xi}\right).  \label{pdf_SINR}
\end{eqnarray}
By using (\ref{pdf_SINR}) in (\ref{eq:defpb}), an approximation on the ABER is derived in the following proposition. 

\begin{proposition}
\label{pro:BER}
When $E_t\gg1$ and $M\gg 2d_ec_e(1+c_e/b_e)K(K-1)+\frac{1}{P}+\frac{K}{Q}$, the ABER can be approximated as
\begin{eqnarray}
&&P_b(e)\approx A\left(\frac{b_e}{b_e+c_e}\right)^{2-K}e^{D-2P_c^2\sqrt{\frac{BM}{(K-1)d_e(b_e+c_e)}}}. \label{eq:BER_2}
\end{eqnarray}
\end{proposition}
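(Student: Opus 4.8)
The plan is to substitute the SINR PDF from (\ref{pdf_SINR}) into the ABER definition (\ref{eq:defpb}) with the AWGN conditional error $P_b(e|r)=A\,\mathrm{erfc}(\sqrt{Br})$, and then argue that, under the stated high-SINR regime, only the first (heavy-tailed) term of (\ref{pdf_SINR}) contributes to leading order, exactly as was done for the outage probability in Proposition~\ref{pro:outage_app}. First I would write
\[
P_b(e)=A\int_0^{M/\xi}\mathrm{erfc}\!\left(\sqrt{Br}\right) f_{\rm SINR}(r)\,\mathrm{d}r,
\]
and split this into the two contributions coming from the two terms of (\ref{pdf_SINR}). For the summation term, each summand carries the factor $e^{-(\frac{M}{r}-\xi)P_c^4/((K-1)d_ec_e)}$ whose exponent is $b_e/c_e$ times larger in magnitude than that of the first term's exponent $e^{-(\frac{M}{r}-\xi)P_c^4/((K-1)d_e(b_e+c_e))}$; under $M\gg 2d_ec_e(1+c_e/b_e)K(K-1)+\tfrac1P+\tfrac KQ$ the same estimate $a\gg 2K>1$ used in Proposition~\ref{pro:outage_app} shows these terms are exponentially negligible against the first, so I would drop them.

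Next I would evaluate the remaining integral
\[
A\,\frac{(b_e+c_e)^{K-3}MP_c^4}{(K-1)d_eb_e^{K-2}}\,e^{\frac{\xi P_c^4}{(K-1)d_e(b_e+c_e)}}\int_0^{M/\xi}\frac{1}{r^2}\,\mathrm{erfc}\!\left(\sqrt{Br}\right)e^{-\frac{MP_c^4}{r(K-1)d_e(b_e+c_e)}}\,\mathrm{d}r.
\]
Here I would use the substitution $u=1/r$ (so $\mathrm{d}r/r^2=-\mathrm{d}u$) to turn it into $\int_{\xi/M}^{\infty}\mathrm{erfc}(\sqrt{B/u})\,e^{-\lambda u}\,\mathrm{d}u$ with $\lambda=MP_c^4/((K-1)d_e(b_e+c_e))$, which in the high-SINR regime is dominated by small $u$; there I would replace $\mathrm{erfc}(\sqrt{B/u})$ by an approximation valid for large argument and recognize the resulting Laplace-type integral, or alternatively invoke the standard closed form $\int_0^\infty e^{-\lambda u}\mathrm{erfc}(\sqrt{B/u})\mathrm{d}u$ expressible via $e^{-2\sqrt{B\lambda}}$ (e.g.\ \cite{Int}). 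This produces the exponent $-2\sqrt{B\lambda}=-2P_c^2\sqrt{BM/((K-1)d_e(b_e+c_e))}$, and collecting the prefactor together with $e^{\xi P_c^4/((K-1)d_e(b_e+c_e))}$ and using $P_c\approx1$ (from $E_t\gg1$, so $\xi\approx 1/P+K/Q$ and the exponential prefactor becomes $e^{D}$ with $D$ as defined in Proposition~\ref{pro:outage_app}) gives (\ref{eq:BER_2}).

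The main obstacle I expect is the $1/r^2$-weighted $\mathrm{erfc}$ integral: unlike the outage case, where the upper incomplete gamma function appeared directly, here the $\mathrm{erfc}$ factor must be handled carefully so that the $e^{-2\sqrt{B\lambda}}$ saddle-point behaviour emerges cleanly while the lower limit $\xi/M$ and the prefactor $M/r^2$ are controlled; I would justify extending the lower limit to $0$ and the use of the large-argument regime of the integrand by the same $M\gg(\cdot)$ hypothesis that makes $\lambda$ large. A secondary technical point is confirming that the neglected summation term in (\ref{pdf_SINR}) really is dominated after integration against $\mathrm{erfc}$, which follows because $\mathrm{erfc}$ is bounded by $1$ and the exponential gap between the two terms is the same as in the outage analysis.
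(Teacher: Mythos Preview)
Your proposal is correct and follows essentially the same line as the paper: drop the summation term in (\ref{pdf_SINR}) by the same $a\gg 2K$ estimate used for Proposition~\ref{pro:outage_app}, then evaluate the remaining $\mathrm{erfc}$-weighted integral after extending the integration range. The paper carries out your ``standard closed form'' alternative explicitly via $\mathrm{erfc}(x)=\Gamma(\tfrac12,x^2)/\sqrt{\pi}$, the table identity $\int_0^\infty e^{-\mu x}\Gamma(v,a/x)\,\mathrm{d}x=2a^{v/2}\mu^{v/2-1}K_v(2\sqrt{\mu a})$ from \cite{Int}, and $K_{1/2}(x)=\sqrt{\pi/(2x)}\,e^{-x}$, which produces the factor $\tfrac{1}{\lambda}e^{-2\sqrt{B\lambda}}$ and hence (\ref{eq:BER_2}) after collecting prefactors.
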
 

 \begin{proof}
The PDF of the SINR in (\ref{pdf_SINR}) can be rewritten as 
\begin{eqnarray}
&&\hspace{-2cm} f_{\rm SINR}(r)=\frac{\left(b_e+c_e\right)^{K-3}MP_c^4}{r^2 (K-1)d_eb_e^{K-2}}e^{-\frac{\left(\frac{M}{r}-\xi\right)P_c^4}{(K-1)d_e(b_e+c_e)}}\left[1-\frac{\sum_{n=0}^{K-3}\frac{\left(\frac{b_e\left(\frac{M}{r}-\xi\right)P_c^4}{(K-1)d_ec_e(b_e+c_e)}\right)^n}{\Gamma(n+1)}}{e^{\frac{b_e\left(\frac{M}{r}-\xi\right)P_c^4}{(K-1)d_ec_e(b_e+c_e)}}}\right].\label{pdf_SINR_1}
\end{eqnarray}
As the ABER is determined by the PDF when $r$ is small \cite{giannakis}, we consider the range $r<1$. With $E_t\gg1$ and $M\gg 2d_ec_e(1+c_e/b_e)K(K-1)+1/P+K/Q$, similarly as the proof of Proposition \ref{pro:outage_app}, we can show that ${\sum_{n=0}^{K-3}{\left(\frac{b_e\left(\frac{M}{r}-\xi\right)P_c^4}{(K-1)d_ec_e(b_e+c_e)}\right)^n}/{\Gamma(n+1)}}{/}{e^{\frac{b_e\left(\frac{M}{r}-\xi\right)P_c^4}{(K-1)d_ec_e(b_e+c_e)}}}\ll 1$, and thus this term can be ignored. 
%

The ABER can be derived by solving $\int_{r=0}^{M/\xi}A {\rm erfc}(\sqrt{Br})f_{\rm SINR}(r) dr$. As the ABER is determined by the region when $r$ is small, we replace the integration region with $\int_{r=0}^{\infty}$ for a tractable approximation. By using ${\rm erfc}(x)=\Gamma(\frac{1}{2},x^2)/\sqrt{\pi}$, the integration formula  $\int_{0}^{\infty}e^{-\mu x}\Gamma(v,\frac{a}{x}){\rm d}x=2a^{v/2}\mu^{v/2-1}K_v(2\sqrt{\mu a})$ \cite{Int}, and $K_{\frac{1}{2}}(x)=\sqrt{\frac{\pi}{2x}}e^{-x}$ \cite{Handbook}, the ABER approximation in (\ref{eq:BER_2}) is obtained.
\end{proof} 

 We can see from (\ref{eq:BER_2}) that increasing $M$ will make the ABER decrease and approach zero. Besides, for very large $M$ the ABER behaves as  $Ce^{-C'\sqrt{M}}$.  As is known, the ABER of traditional MIMO system with $M$ transmit antennas and 1 receive antenna under Rayleigh fading is $C_{MIMO}{\rm SINR}^{-C_{MIMO}'M}$. This shows different ABER behaviour in the massive MIMO relay network, where the ABER decreases exponentially with respect to $\sqrt{M}$. If the diversity gain definition of traditional MIMO system is used \cite{MIMO}, the massive relay network will have infinite diversity gain.
 
 Comparing (\ref{eq:BER_2}) with (\ref{outate_app}), we see that the ABER and the outage probability has the same scaling with $P$ and $Q$ respectively. Thus $P$, $Q$ scaling analysis for the outage probability also applies to the ABER. In addition, if the threshold is set as $\gamma_{th}=\sqrt{\frac{MP_c^4}{4B(K-1)d_e(b_e+c_3)}}$, the ABER equals $A$ times the outage probability. Thus, there is a simple transformation between the two metrics.

\section{Simulation Results}
\label{sec:simu}

In this section, simulation results are shown to verify the analytical results. 
\begin{table}
\caption{The network settings for Figure 1}
\center
\begin{tabular}{|c|c|c|c|c|c|}
\hline
          & $P_c$ & $P$ & $Q$ & $K$ & $r_s$\\
\hline
Case 1& 0.8 & 10 & 10 & $M/10$ & 0\\
\hline   
Case 2& $100/M$& 10 & 10& 10 & 0\\
\hline
Case 3& 0.8 & 10 & $1/\sqrt{M}$& $\lfloor\sqrt{M}\rfloor$ &0\\
\hline
Case 4& 0.8 & 1 & 1& 20 &  1\\
\hline
Case 5& $10/\sqrt{M}$& 10 &10 &20& $1/2$\\
\hline
\end{tabular}
\label{table}
\end{table}
\begin{figure}
\center
\includegraphics[width=5in]{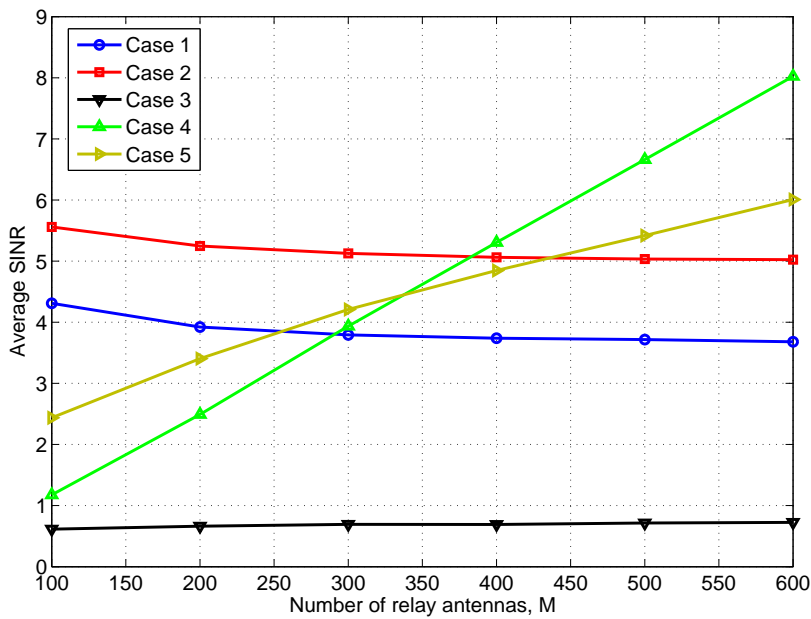}\vspace{-5mm}
\caption{Average SINR v.s.~the number of relay antennas $M$ for different network scenarios.}
\label{fig:scaling}
\end{figure}
In Fig.~\ref{fig:scaling}, the simulated average SINR with respect to the number of relay antennas $M$ is shown for the five network settings given in Table \ref{table} to verify the SINR scaling result in Theorem \ref{thm-1}. In the table, $\lfloor \cdot \rfloor$ is the floor function. For different settings of network parameters, their SINR scalings ($r_s$ values) are calculated based on the SINR scaling law in (\ref{SNR-scaling}) and shown in the table. The first three cases have constant scaling. In Case 4 and Case 5, the average SINR scale as $\mathcal{O}(M)$ and $\mathcal{O}(\sqrt{M})$. The figure verifies these scaling law results.

\begin{figure}[t]
\center
\includegraphics[width=5in]{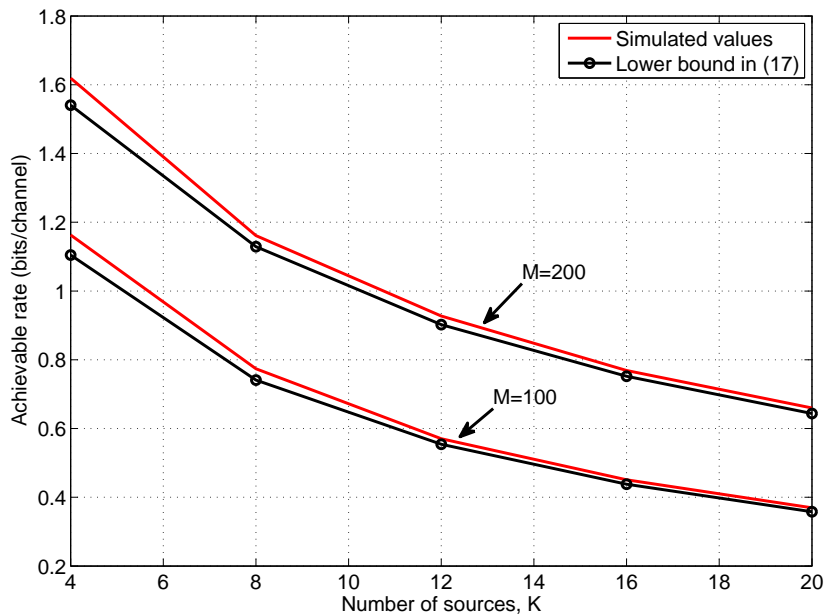} \vspace{-5mm}
\caption{Achievable rate for different number of sources. $M=200$  and $100$, $P=Q=0$ dB, $P_c=1/2$.}
\label{fig-rate}
\end{figure}
In Fig.~\ref{fig-rate}, the average achievable rate per source-destination pair is simulated for different number of sources with $200$ or $100$ relay antennas. The source and the relay powers are set to be $0$ dB. The CSI quality is set as $P_c=1/2$. We can see that the lower bound in (\ref{rate-LB}) is very tight. With given number of relay antennas, the achievable rate per source-destination pair decreases as there are more pairs.

\begin{figure}[t]
\center
\includegraphics[width=5in]{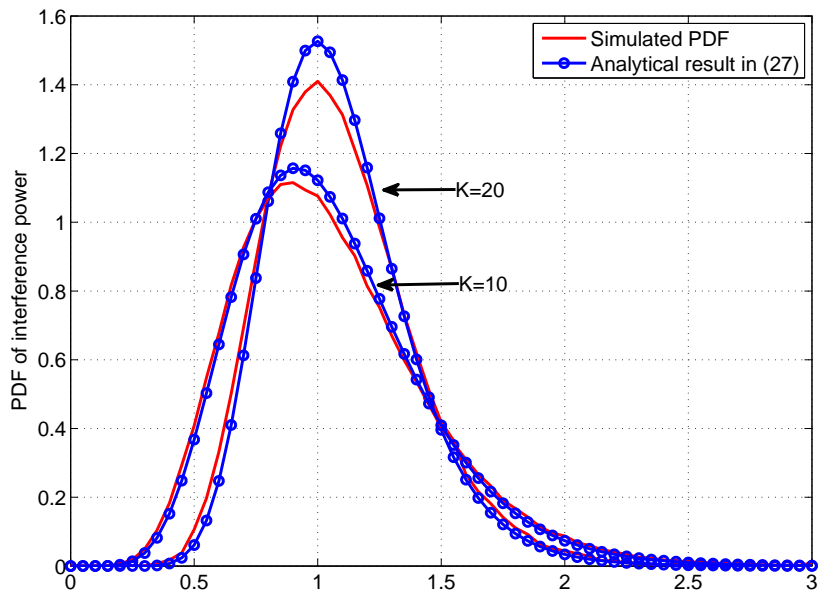} \vspace{-5mm}
\caption{PDF of interference power. $K=20$ or $10$, $P_c=0.8$, $M=200$.}
\label{PDF_inter}
\end{figure}
In Fig.~\ref{PDF_inter}, for a relay network with $20$ or $10$ source-destination pairs and $200$ relay antennas, the simulated PDF of $P_{i,e}$ is shown. The CSI quality parameter is set as $P_c=0.8$. The analytical expression in (\ref{cf-pdf-e}) is compared with the simulated values. We can see from Fig.~\ref{PDF_inter} that the PDF approximation is tight for the whole parameter range. Especially, the approximation matches tightly at the tail when the interference power is large, which is the dominate range of outage and ABER.

\begin{figure}[t]
\center
\includegraphics[width=5in]{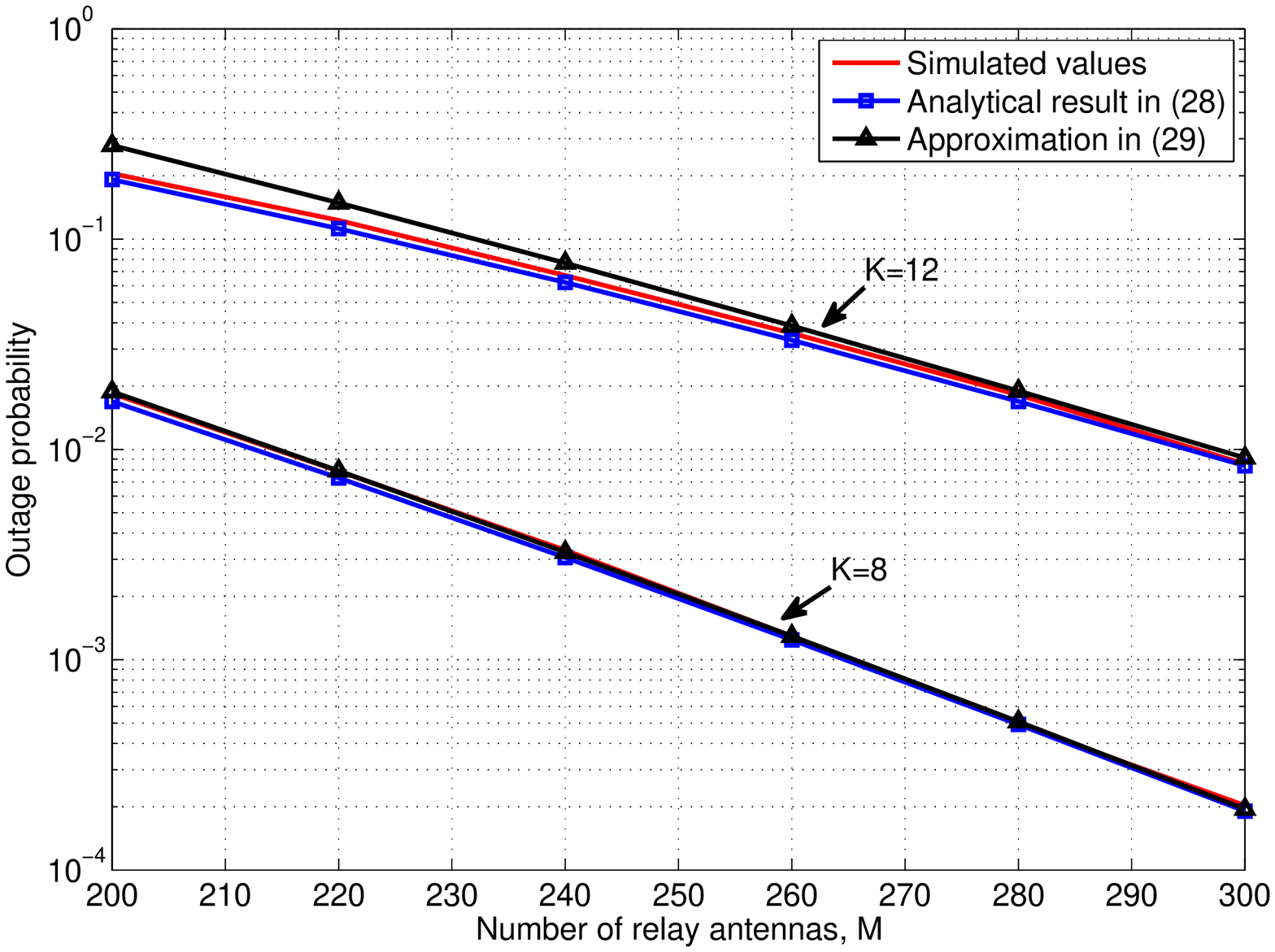} \vspace{-5mm}
\caption{Outage probability for different number of relay antennas. $K=8 \text{ or } 12, P=Q=10$ dB, $\gamma_{th}=8$dB, $P_c=0.95$.}
\end{figure}
Fig.~4 shows the outage probability for different number of relay antennas. The analytical expressions in (\ref{outageprob}) and (\ref{outate_app}) are compared with the simulated values.  The transmit powers of the users and the relay are set as $10$ dB. The CSI quality parameter is set as $P_c=0.95$. The number of sources is $8$ or $12$ and the SINR threshold is $8$ dB. We can see that our analytical result in (\ref{outageprob}) and the further approximation in (\ref{outate_app}) are both tight for all the simulated parameter ranges. Besides, the approximations becomes tighter as the relay antennas number increases.

\begin{figure}[t]
\center
\includegraphics[width=5in]{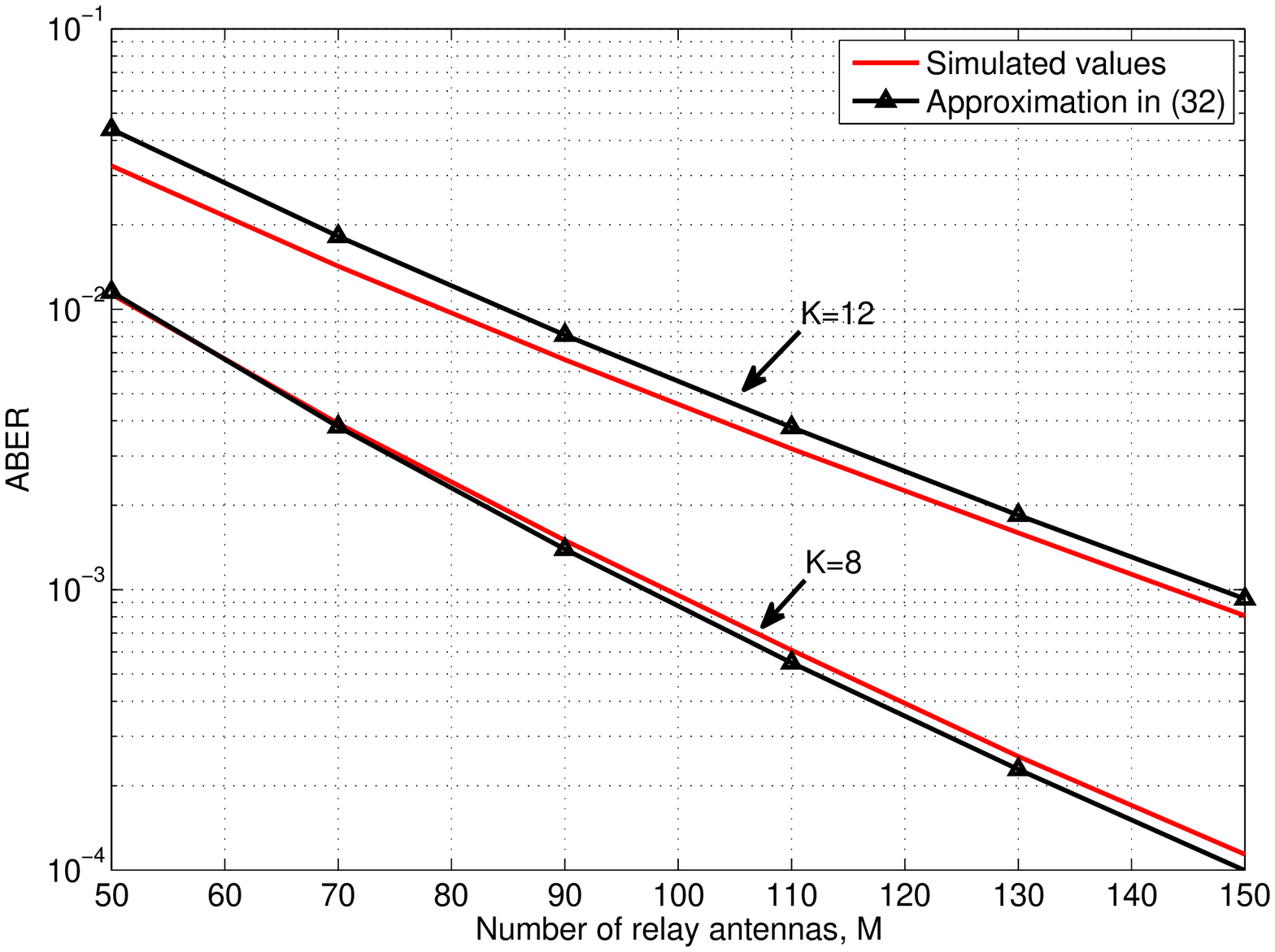} \vspace{-5mm}
\caption{Average bit error rate of BPSK for different number of relay antennas $M$. $K=8 \text{or} 12, P=Q=10$ dB, $P_c=0.95$.}
\label{fig:aber_M_Ksmall}
\end{figure}
In Fig.~\ref{fig:aber_M_Ksmall}, the ABER for BPSK is simulated for different number of relay antennas with $K=8$ or $12$, $P=Q=10$ dB and $P_c=0.95$. The analytical approximation in (\ref{eq:BER_2}) is compared with the simulated values. From the figure, we can see that the analytical result in (\ref{eq:BER_2}) is tight for the simulated values, and is tighter when the number of source-destination pairs is smaller.

\section{Conclusion}
\label{sec:con}
In this work, we analysed the performance of a massive MIMO relay network with multiple source-destination pairs under MRC/MRT relaying with imperfect CSI. Firstly, the performance scaling law is analysed which shows that the scale of the SINR is decided by the summation of the scales of the CSI quality plus the larger of the per-source transmission power of the two hops. With this result, typical scenarios and trade-off between parameters are shown. Our scaling law is comprehensive as it takes into considerations many network parameters, including the number of relay antennas, the number of source-destination pairs, the source transmit power and the relay transmit power. Then, a sufficient condition for asymptotically deterministic SINR is derived, based on which new network scenarios for systems with the asymptotically deterministic property are found
and tradeoff between the parameters is analysed.  At last, we specify the necessary and sufficient condition for networks whose SINR increases linearly with the number of relay antennas. In addition, our work show that for this case the interference power does not become asymptotically deterministic and derived the PDF of the interference power in closed-form. Then the outage probability and ABER expressions for the relay network are obtained and their behaviour with respect to network parameters are analysed. Simulations show that the analytical results are tight.

\section*{Appendix}

\subsection{Derivations of $\Exp\{P_{s,e}\}$ and ${\rm SCV}\{P_{s,e}\}$}
\label{sec-app1}
Firstly, we have
\begin{eqnarray}
\Exp\{P_{s,e}\}&=&\Exp\left\lbrace\frac{|\hat{\bf g}_i\hat{\bf g}_i^H\hat{\bf f}_i^H\hat{\bf f}_i+\sum_{k=1,k\neq i}^{K}\hat{\bf g}_i\hat{\bf g}_k^H\hat{\bf f}_k^H\hat{\bf f}_i|^2}{M^4}\right\rbrace \nonumber\\
&=&\Exp\left\lbrace\frac{\left(\|\hat{\bf g}_i\|_F^2\|\hat{\bf f}_i\|_F^2+\sum_{k=1,k\neq i}^{K}\hat{\bf g}_i\hat{\bf g}_k^H\hat{\bf f}_k^H\hat{\bf f}_i\right)\left(\|\hat{\bf g}_i\|_F^2\|\hat{\bf f}_i\|_F^2+\sum_{k=1,k\neq i}^{K}\hat{\bf g}_i\hat{\bf g}_k^H\hat{\bf f}_k^H\hat{\bf f}_i\right)^H}{M^4}\right\rbrace\nonumber\\
&=&\Exp\left\lbrace\frac{\|\hat{\bf g}_i\|_F^4\|\hat{\bf f}_i\|_F^4}{M^4}\right\rbrace+ \sum_{k=1,k\neq i}^{K}\Exp\left\lbrace\frac{|\hat{\bf g}_i\hat{\bf g}_k^H\hat{\bf f}_k^H\hat{\bf f}_i|^2}{M^4}\right\rbrace,\label{appen_1}
\end{eqnarray}
where the last step is obtained because the means of the cross terms are zero.

In the first term of (\ref{appen_1}), as entries of $\hat{\bf g}_i$ and $\hat{\bf f}_i$ are i.i.d. whose distribution follows $\mathcal{CN}(0, P_c)$, $\|\hat{\bf g}_i\|_F^2$ and $\|\hat{\bf f}_i\|_F^2$ have a gamma distribution with shape parameter $M$ and scale parameter $P_c$. Thus, 
$$\Exp\left\lbrace \frac{\|\hat{\bf g}_i\|_F^4\|\hat{\bf f}_i\|_F^4}{M^4}\right\rbrace=P_c^4\left(1+\frac{2}{M}+\frac{1}{M^2}\right)\approx P_c^4,$$
where the approximation is by ignoring lower order terms of $M$ when $M\gg 1$. For the remaining terms, 
$$\hat{\bf g}_i\hat{\bf g}_k^H\hat{\bf f}_k^H\hat{\bf f}_i=\sum_{m_g=1}^M\sum_{m_f=1}^M\hat{ g}_{i,m_g}\hat{ g}_{k,m_g}^*\hat{ f}_{k,m_f}^*\hat{ f}_{i,m_f},$$
where $\hat{ g}_{i,m_g}$ is the $(i,m_g)$th entry of $\hat{\bf G}$, and $\hat{ f}_{i,m_f}$ is the $(i, m_f)$th entry of $\hat{\bf F}$. Thus $\hat{\bf g}_i\hat{\bf g}_k^H\hat{\bf f}_k^H\hat{\bf f}_i$ can be seen the summation of $M^2$ terms of i.i.d. random variables, each with mean $0$, variance $P_c^2$. According to CLT, the distribution of $\frac{\hat{\bf g}_i\hat{\bf g}_k^H\hat{\bf f}_k^H\hat{\bf f}_i}{M}$ converges to $\mathcal{CN}(0,P_c^4)$ when $M\rightarrow\infty$. Then $\frac{|\hat{\bf g}_i\hat{\bf g}_k^H\hat{\bf f}_k^H\hat{\bf f}_i|^2}{M^4}$ has a gamma distribution with shape parameter $1$ and scale parameter $P_c^4/M^2$. Thus, we can obtain 
$$\sum_{k=1,k\neq i}^{K}\Exp\left\lbrace\frac{|\hat{\bf g}_i\hat{\bf g}_k^H\hat{\bf f}_k^H\hat{\bf f}_i|^2}{M^4}\right\rbrace=\frac{(K-1)P_c^4}{M^2}.$$
As $M\gg K$, we have $\frac{(K-1)P_c^4}{M^2}\ll P_c^4$. Thus the mean of $P_s$ is $P_c^4$. 

Similarly, we can derive the variance of $P_{s,e}$ as below. 
\begin{eqnarray*}
{\rm Var}\{P_{s,e}\}&=&\Exp\left\lbrace \frac{|\hat{\bf g}_i\hat{\bf g}_i^H\hat{\bf f}_i^H\hat{\bf f}_i+\sum_{k=1,k\neq i}^{K}\hat{\bf g}_i\hat{\bf g}_k^H\hat{\bf f}_k^H\hat{\bf f}_i|^4}{M^8} \right\rbrace-\Exp\{P_{s,e}\}^2\\
&\approx & \Exp\left\lbrace \frac{\|\hat{\bf g}_i\|_F^8\|\hat{\bf f}_i\|_F^8}{M^8}\right\rbrace-P_c^8\left(1+\frac{2}{M}+\frac{1}{M^2}
\right)^2\\
&=& P_c^8\frac{(M+3)^2(M+2)^2(M+1)^2}{M^6}-P_c^8\left(1+\frac{2}{M}+\frac{1}{M^2}
\right)^2\approx \frac{8P_c^8}{M}.
\end{eqnarray*}
 
Then, we have ${\rm SCV}\{P_{s,e}\}={\rm Var}\{P_{s,e}\}/(\Exp\{P_{s,e}\})^2=8/M$.

\subsection{Proof of Proposition \ref{pro:condition}}
\label{app-B}
The SINR expression in (\ref{eq:SINR_e}) can be reformulated as 
\begin{equation}
{\rm SINR}_{i,e}= M^{r_s}\frac{P_{s,e}/P_c^4 }{P_{i,e}\frac{K-1}{P_c^4M^{1-r_s}}+\frac{1}{PP_c^4M^{1-r_s}}P_{n,e}+\frac{1}{P_c^4M^{1-r_s}}(
P_{e,1}+P_{e,2}+P_{e,3})+\frac{K(1+\frac{K}{MP_c}+\frac{1}{PP_cM})}{QP_cM^{1-r_s}}}.\label{eq:asym}
\end{equation}

The received SINR is asymptotically deterministic when its SCV approaches zero as $M\rightarrow \infty$. However, due to the complex structure of the SINR expression, it is highly challenging to obtain its SCV directly. Alternatively, as is shown in Section III, $P_{s,e}/P_c^4$ is asymptotically deterministic, thus for the SINR to be asymptotically deterministic, the sufficient and necessary condition is that the denominator of the formula in (\ref{eq:asym}) is asymptotically deterministic. One sufficient condition is that the SCV of the denominator denoted as ${\rm SCV}_d$, is no larger than $E/M$ for some constant $E$\footnote{Note that, when $M\rightarrow\infty$, given any positive number $\alpha$, $1/M^{\alpha}\rightarrow 0$. But for practical applications of the deterministic equivalence analysis in large but finite-dimension systems, we consider the scenario that the SCV decrease linearly with the number of antennas or faster. The derived condition is thus sufficient but not necessary.}. This can be expressed as
\begin{eqnarray}
{\rm SCV}_{d}&=&\frac{{\rm Var}\left\lbrace  P_{i,e}\frac{K-1}{P_c^4M^{1-r_s}}+\frac{1}{PP_c^4M^{1-r_s}}P_{n,e}+\frac{1}{P_c^4M^{1-r_s}}(
P_{e,1}+P_{e,2}+P_{e,3})\right\rbrace}{\left(\Exp\left\lbrace  P_{i,e}\frac{K-1}{P_c^4M^{1-r_s}}+\frac{1}{PP_c^4M^{1-r_s}}P_{n,e}+\frac{1}{P_c^4M^{1-r_s}}(
P_{e,1}+P_{e,2}+P_{e,3})\right\rbrace\right)^2}\le\frac{E}{M}. \label{eq:scv}
\end{eqnarray}
From (\ref{eq:asym}), we have $$\frac{P_{s,e}/P_c^4 }{P_{i,e}\frac{K-1}{P_c^4M^{1-r_s}}+\frac{1}{PP_c^4M^{1-r_s}}P_{n,e}+\frac{1}{P_c^4M^{1-r_s}}(
P_{e,1}+P_{e,2}+P_{e,3})+\frac{K(1+\frac{K}{MP_c}+\frac{1}{PP_cM})}{QP_cM^{1-r_s}}}=\O(1)$$ and since $P_{s,e}/P_c^4 \overset{m.s.}{\longrightarrow} 1$, we have $$\Exp\left\lbrace  P_{i,e}\frac{K-1}{P_c^4M^{1-r_s}}+\frac{1}{PP_c^4M^{1-r_s}}P_{n,e}+\frac{1}{P_c^4M^{1-r_s}}(
P_{e,1}+P_{e,2}+P_{e,3})\right\rbrace=\O(1).$$ Thus (\ref{eq:scv}) is equivalent to that
\begin{equation}
{\rm Var}\left\lbrace  P_{i,e}\frac{K-1}{P_c^4M^{1-r_s}}+\frac{1}{PP_c^4M^{1-r_s}}P_{n,e}+\frac{1}{P_c^4M^{1-r_s}}(
P_{e,1}+P_{e,2}+P_{e,3})\right\rbrace \le \frac{E'}{M}
\label{eq:var_scv}
\end{equation}
for some constant $E'$.
\begin{lemma}
\label{lemma:var}
A sufficient condition for (\ref{eq:var_scv}) is that the variance of each term in (\ref{eq:var_scv}) scales no larger than $1/M$, i.e., the maximum scale order of 
$
 {\rm Var}\left\lbrace P_{i,e}\frac{K-1}{P_c^4M^{1-r_s}}\right\rbrace$, 
${\rm Var}\left\lbrace\frac{1}{PP_c^4M^{1-r_s}}P_{n,e}\right\rbrace$, 
${\rm Var}\left\lbrace \frac{1}{P_c^4M^{1-r_s}}P_{e,1}\right\rbrace$, 
${\rm Var}\left\lbrace \frac{1}{P_c^4M^{1-r_s}}P_{e,2}\right\rbrace$, and
${\rm Var}\left\lbrace \frac{1}{P_c^4M^{1-r_s}}P_{e,3}\right\rbrace$ is no larger than $1/M$.
\end{lemma}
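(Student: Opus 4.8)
The plan is to exploit the fact that the random variable inside the variance in (\ref{eq:var_scv}) is a sum of only \emph{five} terms, a number that does not grow with $M$. Write these terms as $T_1\triangleq P_{i,e}\frac{K-1}{P_c^4M^{1-r_s}}$, $T_2\triangleq \frac{1}{PP_c^4M^{1-r_s}}P_{n,e}$, and $T_j\triangleq\frac{1}{P_c^4M^{1-r_s}}P_{e,j-2}$ for $j=3,4,5$, so that the quantity of interest is ${\rm Var}\{T_1+\cdots+T_5\}$ and the hypothesis of the lemma is that ${\rm Var}\{T_j\}=\O(1/M)$ for each $j$.

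First I would expand the variance of the sum into covariances, ${\rm Var}\{\sum_{j=1}^5 T_j\}=\sum_{j=1}^5\sum_{l=1}^5{\rm Cov}\{T_j,T_l\}$, and bound each covariance by the Cauchy--Schwarz inequality, $|{\rm Cov}\{T_j,T_l\}|\le\sqrt{{\rm Var}\{T_j\}\,{\rm Var}\{T_l\}}$. This yields ${\rm Var}\{\sum_{j=1}^5 T_j\}\le\left(\sum_{j=1}^5\sqrt{{\rm Var}\{T_j\}}\right)^2$, and one more application of Cauchy--Schwarz (equivalently, the power-mean inequality with five terms) gives ${\rm Var}\{\sum_{j=1}^5 T_j\}\le 5\sum_{j=1}^5{\rm Var}\{T_j\}$.

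Then, under the hypothesis, there exist constants $c_1,\dots,c_5$ and an integer $m$ with ${\rm Var}\{T_j\}\le c_j/M$ for all $M\ge m$ and all $j$. Substituting into the previous bound gives ${\rm Var}\{T_1+\cdots+T_5\}\le\frac{5\sum_{j=1}^5 c_j}{M}$, so (\ref{eq:var_scv}) holds with $E'=5\sum_{j=1}^5 c_j$, which completes the argument.

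The argument is elementary and I do not anticipate a genuine obstacle; the only point requiring care is the observation that the number of summands is a fixed constant independent of $M$ (in particular, the growth of $K$ with $M$ has already been absorbed inside the single random variable $T_1$, so it does not inflate the count of terms), and that one must retain the covariance cross-terms rather than naively assume $T_1,\dots,T_5$ are independent — this is precisely why the Cauchy--Schwarz step replaces simple additivity of variances.
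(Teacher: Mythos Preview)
Your argument is correct and, in fact, cleaner than the paper's. Both proofs rest on the same core observation: with a \emph{fixed} finite number of summands, it suffices to control the individual variances because the cross covariances can be bounded by them. You invoke this directly via the Cauchy--Schwarz inequality $|{\rm Cov}\{T_j,T_l\}|\le\sqrt{{\rm Var}\{T_j\}\,{\rm Var}\{T_l\}}$ and then the elementary bound $(\sum_j a_j)^2\le 5\sum_j a_j^2$, obtaining an explicit constant $E'=5\sum_j c_j$. The paper instead argues that $\sum_{i\neq j}{\rm Cov}\{X_i,X_j\}$ is maximized when the $X_n$'s are perfectly linearly correlated, writes that extremal covariance as ${\rm Var}\{X_1\}\sum_{i\neq j}a_ia_j$ with bounded coefficients $a_n$, and concludes the covariance sum inherits the $\O(1/M)$ scale of the largest individual variance. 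This is really Cauchy--Schwarz in disguise (the linearly-correlated case is exactly the equality case of the covariance inequality), so the two routes are morally the same; yours is the more transparent and yields an explicit constant, while the paper's phrasing emphasizes the scaling-order viewpoint without naming the inequality.
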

\begin{proof}
The variance of $P_{i,e}\frac{K-1}{P_c^4M^{1-r_s}}+\frac{1}{PP_c^4M^{1-r_s}}P_{n,e}+\frac{1}{P_c^4M^{1-r_s}}(
P_{e,1}+P_{e,2}+P_{e,3})$ is the summation of two parts: the variances of each term, and the covariance of every two terms. 
Now, we will prove that if the variances of each term scales no larger than $1/M$, their covariance also scales no larger than $1/M$.

To make it general and clear, we define $Y=\sum_{n=1}^N X_n$, where $N$ is a finite integer and $X_n$'s are random variables. Without loss of generality, we assume that ${\rm Var}\{X_1\}$ has the highest scale among all ${\rm Var}\{X_n\}$'s and ${\rm Var}\{X_1\}=\O(1/M^\alpha)$, where $\alpha\ge 1$. The variance of $Y$ is
$${\rm Var}\{Y\}=\sum_{n=1}^N{\rm Var}\{X_n\}+\sum_{i\neq j}{\rm Cov}\{X_i,X_j\}.$$
By the definition of covariance, $\sum_{i\neq j}{\rm Cov}\{X_i,X_j\}$ takes the maximum value when $X_n$'s are linearly correlated, i.e., $X_1=X_2/a_2=X_3/a_3\dots=X_N/a_N$. In this case, we can obtain that $$\sum_{i\neq j}{\rm Cov}\{X_i,X_j\}={\rm Var} \{X_1\}\sum_{i\neq j}a_ia_j,$$
where we have defined $a_1=1$.

As ${\rm Var}\{X_1\}$ has the highest scale, we have $a_n$ scales no higher than $\O(1)$, that is, there exists constants $c_n$'s such that $a_n \le c_n$. Thus $\sum_{i\neq j}{\rm Cov}\{X_i,X_j\}=\O(1/M^{\alpha})$, and consequently ${\rm Var}\{Y\}$ scales no higher than $1/M^\alpha$.
\end{proof}

Given Lemma \ref{lemma:var}, we only need to find the condition for the variances of $(K-1)P_{i,e}/(P_c^4M^{1-r_s})$, $P_{n,e}/(PP_c^4M^{1-r_s})$, $P_{e,1}/(P_c^4M^{1-r_s})$, $P_{e,2}/(P_c^4M^{1-r_s})$, and $\frac{1}{P_c^4M^{1-r_s}}P_{e,3}$ to scale no larger than $1/M$. Using the results on the variances of SINR components, the variances of the terms can be obtained as
\begin{eqnarray*}
&&{\rm Var}\{\frac{K-1}{P_c^4M^{1-r_s}}P_{i,e}\}=\frac{(K-1)^2}{P_c^2M^{2-2r_s}}\left(\hspace{-1mm}\frac{4}{K-1}\hspace{-1mm}+\hspace{-1mm}
\frac{8+10P_c}{P_cM}\hspace{-1mm}+\hspace{-1mm}\frac{K^2+18(K-2)P_c}{(K-1)P_c^2M^2}\hspace{-1mm}\right)\sim \O\left(M^{-(2-2r_s-2r_c-r_k)}\right),
\\
&&{\rm Var}\{\frac{1}{PP_c^4M^{1-r_s}}P_{n,e}\}=\frac{\frac{2}{P_c^3}+\frac{5}{P_c^2}-\frac{2}{P_c}}{M^{3-2r_s}P^2}\sim \O\left(M^{-(3-2r_s-3r_c-2r_p)}\right), \\
&&{\rm Var}\{\frac{1}{P_c^4M^{1-r_s}}P_{e,1}\}=\frac{3K}{M^{4-2r_s}}(\frac{1}{P_c}-1)^4
\sim \O\left(M^{-(4-2r_s-4r_c-r_k)}\right), \\
&&{\rm Var}\{\frac{1}{P_c^4M^{1-r_s}}P_{e,2}\}={\rm Var}\{\frac{1}{P_c^4M^{1-r_s}}P_{e,3}\}=\frac{1}{M^{2-2r_s}}(\frac{1}{P_c}-1)^2
\sim \O\left(M^{-(2-2r_s-2r_c)}\right),
\end{eqnarray*}
where the scaling behaviour at the end of each line is obtained from the definitions of the scaling exponents in (\ref{exponents-def}) and considering the constraints in (\ref{cond-scaling}). Then, we can see that the condition for the scaling order of each term to be no higher than $1/M$ is that both following constrains are satisfied.
\begin{equation}
r_k+2r_c+2r_s\le 1,\quad 2r_p+3r_c+2r_s \le 2. \label{eq:cond_2}
\end{equation}
Combining (\ref{cond-scaling}) and (\ref{eq:cond_2}), we get the sufficient condition for the SINR to be deterministic in (\ref{eq:suff-con}).

\subsection{Proof of Proposition \ref{pro:linearSINR}}
\label{app-C}
Linearly increasing SINR means that the SINR scaling exponent is 1, i.e., $r_s=1$. Thus the SINR can be formulated as 
\begin{equation*}
{\rm SINR}_{i,e}= M\frac{P_{s,e}/P_c^4 }{P_{i,e}\frac{K-1}{P_c^4}+\frac{1}{PP_c^4}P_{n,e}+\frac{1}{P_c^4}(
P_{e,1}+P_{e,2}+P_{e,3})+\frac{K(1+\frac{K}{MP_c}+\frac{1}{PP_cM})}{QP_c}}.
\end{equation*}
From the SINR scaling law in (\ref{SNR-scaling}), we can see that the sufficient and necessary condition for $r_s=1$ is $r_c=r_p=r_k=r_q=0$ (note that $r_c,r_p,r_q,r_k \in [0,1]$). 

With the parameter values, we can calculate that the SCVs of $P_{s,e}/P_c^4$ and $P_{n,e}/P/P_c^4$ scales of $1/M$. Therefore, they are asymptotically deterministic and can be approximated with their mean values. On the other hand, the SCVs of $(K-1)P_{i,e}/P_c^4$, $P_{e,1}/P_c^4$, $P_{e,2}/P_c^4$, and $P_{e,3}/P_c^4$ are constant. We analyze their behaviour next.  
$${\rm Var}\left\{\frac{K-1}{P_c^4}P_{i,e}\right\}\ge \frac{4(K-1)}{P_c^2}\ge 4(K-1) \left(\frac{1}{P_c}-1\right)^2=4(K-1){\rm Var}\left\{\frac{P_{e,2}}{P_c^4}\right\}.$$
 Also, $P_{e,2}$ and $P_{e,3}$ have the same distribution. As we mainly consider the non-trivial case that $K\ge 3$, we have ${\rm Var}\{(K-1)P_{i,e}/P_c^4\}\gg P_{e,2}/P_c^4, P_{e,3}/P_c^4$, especially when the CSI quality $P_c$ is high.
  Besides, the mean of $P_{e,1}/P_c^4$ scales as $1/M$, and its variance scales as $1/M^2$. Thus the variance of this term is also much smaller than $P_{i,e}(K-1)/P_c^4$. Therefore, $P_{i,e}(K-1)/P_c^4$ dominates the random behaviour of the SINR and other terms can  be approximated with their mean values. Thus the SINR approximation in (\ref{eq:SINR_app_high}) is obtained, where only dominant terms of $M$ are kept.

\subsection{Proof of Proposition \ref{pro:pdf}}
\label{app-D}
When $K=2$, $P_{i,e} =\left|\frac{{\bf g}_i\hat{\bf g}_i^H\hat{\bf f}_i^H{\bf f}_k}{\sqrt{M^3}}+\frac{{\bf g}_i\hat{\bf g}_k^H\hat{\bf f}_k^H{\bf f}_k}{\sqrt{M^3}}\right|^2/(K-1).$
Then, using CLT, $P_{i,e}$ has an exponential distribution with parameter $1/d_e$. Then, the PDF can be approximated as $f_{P_{i,e}}(y)\approx e^{-y/d_e}/d_e$, which is the same as (\ref{cf-pdf-e}) for $K=2$.

Now, we work on the more complicated case of $K\ge 3$. Firstly,
\begin{eqnarray*}
&&\hspace{-5mm}\frac{|{\bf g}_i\hat{\bf G}^{\hspace{-0.5mm}H}\hspace{-0.5mm}\hat{\bf F}^{\hspace{-0.5mm}H}{\bf f}_k|^2}{M^3}\hspace{-1mm} =\left|\frac{{\bf g}_i\hat{\bf g}_i^H\hat{\bf f}_i^H{\bf f}_k}{\sqrt{M^3}}\hspace{-1mm}+\hspace{-1mm}\frac{{\bf g}_i\hat{\bf g}_k^H\hat{\bf f}_k^H{\bf f}_k}{\sqrt{M^3}}\hspace{-1mm}+\hspace{-2mm}\sum_{n\neq i,n\neq k}^M\hspace{-2mm}\frac{{\bf g}_i\hat{\bf g}_n^H\hat{\bf f}_n^H{\bf f}_k}{\sqrt{M^3}}\right|^2
\end{eqnarray*}
With the help of CLT, as $M\gg1$, $\frac{{\bf g}_i\hat{\bf g}_i^H\hat{\bf f}_i^H{\bf f}_k}{\sqrt{M^3}}$ is approximately distributed as $CN(0,P_c^3+\frac{P_c^2}{M})$, and $\frac{{\bf g}_i\hat{\bf g}_n^H\hat{\bf f}_n^H{\bf f}_k}{\sqrt{M^3}}$ is approximately distributed as $CN(0,\frac{P_c^2}{M})$. We can further show that the covariances between $\frac{{\bf g}_i\hat{\bf g}_i^H\hat{\bf f}_i^H{\bf f}_k}{\sqrt{M^3}}$, $\frac{{\bf g}_i\hat{\bf g}_n^H\hat{\bf f}_n^H{\bf f}_k}{\sqrt{M^3}}$, and $\frac{{\bf g}_i\hat{\bf g}_k^H\hat{\bf f}_k^H{\bf f}_k}{\sqrt{M^3}}$ are zero, thus they are uncorrelated. For tractable analysis, we assume independence as they are Gaussian distributed. Now we conclude that $\frac{|{\bf g}_i\hat{\bf G}^{H}\hat{\bf F}^{H}{\bf f}_k|^2}{(K-1)M^3}$ has a gamma distribution with shape parameter $1$ and scale parameter $\frac{P_c^3}{K-1}\left(2+\frac{K}{MP_c}\right)$, which is also defined as $d_e$.

Using CLT, the covariance between $\frac{|{\bf g}_i\hat{\bf G}^{H}\hat{\bf F}^{H}{\bf f}_k|^2}{(K-1)M^3}$ and $\frac{|{\bf g}_i\hat{\bf G}^{H}\hat{\bf F}^{H}{\bf f}_l|^2}{(K-1)M^3}$ ($k\neq l$) can be derived as
\begin{equation}
{\rm Cov}=\frac{4P_c^5+10P_c^6}{(K-1)^2M}+\frac{18P_c^5+(2K-4)P_c^6}{(K-1)^2M^2},\label{eq:Inter_cov-e}
\end{equation}
where the proof is omitted due to save space. The correlation coefficient between the two is subsequently
\begin{eqnarray}
\rho_{jl}&=&\hspace{-1mm}\frac{{\rm Cov}\left\{\frac{|{\bf g}_i\hat{\bf G}^{H}\hat{\bf F}^{H}{\bf f}_k|^2}{(K-1)M^3},\frac{|{\bf g}_i\hat{\bf G}^{H}\hat{\bf F}^{H}{\bf f}_l|^2}{(K-1)M^3}\hspace{-1mm}\right\}}{\sqrt{\mathrm{Var}
\Big\{\frac{|{\bf g}_i\hat{\bf G}^{H}\hat{\bf F}^{H}{\bf f}_k|^2}{(K-1)M^3}\Big\}\mathrm{Var}\Big\{\frac{|{\bf g}_i\hat{\bf G}^{H}\hat{\bf F}^{H}{\bf f}_l|^2}{(K-1)M^3}\Big\}}}\hspace{-1mm}\approx \frac{1}{M}\frac{\frac{4}{P_c}+10}{(2+\frac{K}{MP_c})^2}.\label{eq:correlation-e}
\end{eqnarray}
It equals $\rho_e^2$ based on the definition in (\ref{eq:miu}).

Thus $P_{i,e}$ is a summation of $K-1$ correlated random variables following the same Gamma distribution. From Corollary 1 of \cite{refe2}, the PDF of $P_{i,e}$ is 
\begin{equation}\label{pdfproof}
f_{P_{i,e}}(y)=\prod_{i=1}^{K-1}\Big(\frac{\sigma_1}{\sigma_i}\Big)\sum_{j=0}^\infty
\frac{\delta_jy^{K+j-2}e^{-y/\sigma_1}}{\sigma_1^{K+j-1}\Gamma(K+j-1)},
\end{equation}
where $\sigma_1\le\sigma_2\le\cdots\le \sigma_{K-1} $ are the ordered eigenvalues of the $(K-1)\times (K-1)$ matrix $\bf A$, whose diagonal entries are $d_e$ and off-diagonal entries are $d_e\rho_e$, and $\delta_j$'s are defined iteratively as
\begin{eqnarray}
&&\delta_0  \triangleq1, \quad \delta_{j+1}  \triangleq \frac{1}{j+1}\sum\limits_{m=1}^{j+1}\left[\sum\limits_{n=1}^{K-1}
\Bigg(1-\frac{\sigma_1}{\sigma_n}\Bigg)^m\right]\delta_{j+1-m}.\label{deltai+1}
\end{eqnarray}
As $\bf A$ is a circulant matrix whose off-diagonal entries are the same, its eigenvalues can be calculated as
\begin{eqnarray} 
&&\sigma_1=\cdots
=\sigma_{K-2}=d_e-d_e\rho_e,\label{lamda1} \hspace{1cm}\sigma_{K-1}\hspace{-1mm}=d_e+(K-2)d_e\rho_e. \label{lamdaK}
\end{eqnarray}
Then we can show that
\begin{equation}\label{deltai}
\delta_j = \Bigg(\frac{(K-1)\rho_e}{1+(K-2)\rho_e}\Bigg)^j=\left(\frac{b_e}{b_e+c_e}\right)^j.
\end{equation}
Substituting \eqref{lamda1} and \eqref{deltai} into \eqref{pdfproof}, we can get PDF of 
$P_{i,e}$ as in (\ref{equ4}) in Proposition \ref{pro:pdf}. Notice that
\setlength{\arraycolsep}{1pt}
\begin{eqnarray*}
&&\hspace{-2mm}\sum_{i=0}^{\infty}\left(\frac{b_e}{b_e+c_e}\right)^i\phi\hspace{-0.5mm}\left(y;K\hspace{-0.5mm}+\hspace{-0.5mm}i\hspace{-0.5mm}-\hspace{-0.5mm}1,d_ec_e\right)=\left(\frac{b_e}{b_e+c_e}\right)^{\hspace{-0.5mm}-\hspace{-0.5mm}(K\hspace{-0.5mm}-\hspace{-0.5mm}2)}\hspace{-2mm}\frac{e^{-\frac{y}{d_ec_e}}}{d_ec_e} \hspace{-1mm}\left(\hspace{-1mm}\sum_{n=0}^\infty \hspace{-2mm}-\hspace{-2mm} \sum_{n=0}^{K-3}\right)\hspace{-1mm} \left(\frac{b_e}{d_ec_e(b_e+c_e)}\right)^{\hspace{-1mm} n}\hspace{-2mm}\frac{y^n}{n!}.
\end{eqnarray*}
By straightforward calculations, we can obtain 
the closed-form PDF of $P_{i,e}$ in (\ref{cf-pdf-e}).

\linespread{1.2}


\begin{thebibliography}{1}
\bibitem{MIMO}
J. R. Hampton, {\em Introduction to MIMO Communications}, Cambridge university press, 2013.

\bibitem{massive}
F. Rusek, D. Persson, B. K. Lau, E. G. Larsson, T. L. Marzetta, O. Edfors, and F. Tufvesson, ``Scaling up MIMO: Opportunities and challenges with very large arrays,'' {\em IEEE Signal Process. Mag.}, vol. 30, no. 1, pp. 40--60, 2013.

\bibitem{massive-1}
E. Bj\"{o}rnson, E. G. Larsson and T. L. Marzetta, ``Massive MIMO: ten myths and one critical question,'' {\em IEEE Commun. Mag.}, vol. 54, no. 2, pp. 114-123, Feb. 2016.

\bibitem{efficiency}
H. Q. Ngo, E. G. Larsson and T. L. Marzetta, ``Energy and spectral efficiency of very large multiuser MIMO systems,'' {\em IEEE Trans. Commun.}, vol. 61, no. 4, pp. 1436-1449, Apr. 2013.


 \bibitem{MRCvsZF}
 H. Yang and T. L. Marzetta, ``Performance of conjugate and zero-forcing beamforming in large-scale antenna systems,'' {\em IEEE J. Sel. Areas in Commun.}, vol. 31, no. 2, pp. 172-179, Feb. 2013. 
  


\bibitem{scaling-mimo-qi}
Q. Zhang, S. Jin, K. K. Wong, H. Zhu and M. Matthaiou, ``Power scaling of uplink massive MIMO systems with arbitrary-rank channel means,'' {\em IEEE J. Sel. Topics in Sig. Process.}, vol. 8, no. 5, pp. 966-981, Oct. 2014.

\bibitem{SINR_distri}
W. Zhang, C. Pan, B. Du, M. Chen, X. Gong, and J. Dai, ``Downlink SINR study in multiuser large scale antenna systems'', {\em Wireless Personal Commun.}, vol. 79, no. 2, pp. 1539-1556, Nov. 2014. 

\bibitem{Chi}
C. Feng, Y. Jing and S. Jin, ``Interference and outage probability analysis for massive MIMO downlink with MF precoding," {\em IEEE Signal Processing Letters}, vol. 23, no. 3, pp. 366-370, Mar. 2016.


\bibitem{composite_fading}
M. S. Zia and S. A. Hassan, ``Outage analysis of multi-user massive MIMO systems subject to composite fading,'' in {\em 2015 IEEE 81st Veh. Tech. Conf. (VTC Spring)}, Glasgow, 2015, pp. 1-5.

\bibitem{relay-book}
K. R. Liu, {\em Cooperative communications and networking}, Cambridge university press, 2009.

\bibitem{M3}
S.~Ren and M.~van~der Schaar, ``Distributed power allocation in multi-user multi-channel cellular relay networks,'' {\em IEEE Trans. Wireless Commun.}, vol. 9, no. 6, pp. 1952--1964, Jun. 2010.

\bibitem{Qian1}
Q.~Cao, H.~V.~Zhao, and Y.~Jing, ``Power allocation and pricing in multiuser relay networks using stackelberg and bargaining games,'' {\em IEEE Trans. Veh. Technol.}, vol. 61, no. 7, pp. 3177--3190, Sept. 2012.

\bibitem{Qian2}
Q.~Cao, Y.~Jing, and H.~V.~Zhao, ``Power bargaining in multi-source relay networks,'' in {\em Proc. IEEE Commun. Conf.}, Jun. 2012, pp. 3905--3909.

\bibitem{our-work1}
Q. Wang and Y. Jing, ``Power allocation and sum-rate analysis for multi-user multi-relay networks,'' in {\em Proc. IEEE Veh. Technol. Conf.}, Las Vegas, NV, Sept. 2013, pp. 1-5.

\bibitem{7of1}
A. Agustin and J. Vidal, ``Amplify-and-forward cooperation under interference-limited spatial reuse of the relay slot,'' {\em IEEE Trans. Wireless Commun.}, vol. 7, pp. 1952-1962, May 2008.


\bibitem{3}
G. Zhu, C. Zhong, H.~A. Suraweera, Z. Zhang, C. Yuen, and R. Yin, ``Ergodic capacity comparison of different relay precoding schemes in
  dual-hop AF systems with co-channel interference,'' {\em IEEE Trans. Wireless Commun.}, vol. 62, no. 7, pp. 2314-2328, July 2014.

\bibitem{4}
G. Zhu, C. Zhong, H. A. Suraweera, Z. Zhang, and C. Yuen, ``Outage probability of dual-hop multiple antenna AF systems with
  linear processing in the presence of co-channel interference,'' {\em IEEE Trans. Wireless Commun.}, vol. 13, no. 4, pp. 2308-2321, April 2014.



\bibitem{sum-rate-analysis}
G. Amarasuriya, ``Sum rate analysis for multi-user massive MIMO relay networks,'' in {\em Proc. 2015 IEEE Global Commun. Conf. (GLOBECOM)}, San Diego, CA, 2015, pp. 1-7.

\bibitem{multi-cell}
G. Amarasuriya and H. V. Poor, ``Multi-user relay networks with massive MIMO,'' in {\em Proc. 2015 IEEE Int. Conf. Commun. (ICC)}, London, 2015, pp. 2017-2023.

\bibitem{channel_aging}
G. Amarasuriya and H. V. Poor, ``Impact of channel aging in multi-way relay networks with massive MIMO,'' in {\em Proc. 2015 IEEE Int. Conf. Commun. (ICC)}, London, 2015, pp. 1951-1957.


\bibitem{1}
H. A. Suraweera, H.~Q. Ngo, T.~Q. Duong, C. Yuen, and E.~G. Larsson, ``Multi-pair amplify-and-forward relaying with very large antenna arrays,'' in {\em Proc. IEEE Int. Conf. Commun.}, Budapest, June 2013, pp. 4635-4640.

\bibitem{2}
H. Cui, L. Song, and B. Jiao, ``Multi-pair two-way amplify-and-forward relaying with very large number of relay antennas,'' {\em IEEE Trans. Wireless Commun.}, vol. 13, no. 5, pp. 2636-2645, May 2014.



\bibitem{5}
S. Jin, X. Liang, K. K. Wong, X. Gao and Q. Zhu, ``Ergodic rate analysis for multipair massive MIMO two-way relay networks,'' {\em IEEE Trans. Wireless Commun.}, vol. 14, no. 3, pp. 1480-1491, Mar. 2015.

\bibitem{efficiency-full-duplex-AF}
X. Jia, P. Deng, L. Yang and H. Zhu, ``Spectrum and energy efficiencies for multiuser pairs massive MIMO systems with full-duplex amplify-and-forward relay,'' {\em IEEE Access}, vol. 3, pp. 1907-1918, 2015.

\bibitem{efficiency-zf}
T. V. T. Le and Y. H. Kim, ``Power and spectral efficiency of multi-pair massive antenna relaying systems with zero-forcing relay beamforming,'' {\em IEEE Commun. Letters}, vol. 19, no. 2, pp. 243-246, Feb. 2015.


\bibitem{rate-CSI}
Y. Wang, S. Li, C. Li, Y. Huang and L. Yang, ``Ergodic rate analysis for massive MIMO relay systems with multi-pair users under imperfect CSI,'' in {\em 2015 IEEE Global Conf. on Sig. and Info. Process. (GlobalSIP)}, Orlando, FL, 2015, pp. 33-37.


\bibitem{efficiency-twoway-mrcmrt-csi}
H. Wang, J. Ding, J. Yang, X. Gao and Z. Ding, ``Spectral and energy efficiency for multi-pair massive MIMO two-way relaying networks with imperfect CSI,'' in {\em Proc. 2015 IEEE 82nd Veh. Tech. Conf. (VTC Fall)}, Boston, MA, Sept. 2015, pp. 1-6.




\bibitem{non-cooperative}
T. L. Marzetta, ``Noncooperative cellular wireless with unlimited numbers of base station antennas,'' {\em IEEE Trans. Wireless Commun.}, vol. 9, no. 11, pp. 3590-3600, Nov. 2010.

\bibitem{random_book}
H. Kobayashi, B. L. Mark, and W. Turin, {\em Probability, random processes, and statistical analysis}, Cambridge Univ. Press, 2012.

\bibitem{scv}
B. Everitt, {\em The Cambridge dictionary of statistics}, Cambridge University Press, 1998.

\bibitem{convex}
S.~Boyd and L.~Vandenberghe, {\em Convex optimization}, Cambridge Univ. Press, 2004.

\bibitem{giannakis}  
A.~Ribeiro, X.~Cai, and G.~B.~Giannakis, ``Symbol error probabilities for general cooperative links,'' {\em IEEE Trans. Wireless Commun.}, vol. 4, no. 3, pp. 1264--1273, May 2005.

\bibitem{Handbook}
M.~Abramowitz and I.~A. Stegunhor, {\em Handbook of mathematical functions}, Dover, 1964.



\bibitem{Int}
I.~S. Gradshteyn and I.~M. Ryzhik, {\em Table of integrals, series and products}, New York: Academic Press, 1966.

\bibitem{refe2}
M. S.~Alouini, A.~Abdi, and M.~Kaveh, ``Sum of gamma variates and
performance of wireless communication systems over Nakagami-fading
channels,'' {\rm IEEE Trans. Veh. Technol.}, vol.~50, pp.~1471-1480, Nov.~2001.

\bibitem{hardening}
B. M. Hochwald, T. L. Marzetta and V. Tarokh, ``Multiple-antenna channel hardening and its implications for rate feedback and scheduling,'' {\em IEEE Trans. Info. Theory}, vol. 50, no. 9, pp. 1893-1909, Sept. 2004.

\end{thebibliography}
\end{document}